\documentclass{article}
\pdfoutput=1
\usepackage{arxiv,amsthm,amssymb}

\usepackage[utf8]{inputenc} % allow utf-8 input
\usepackage[T1]{fontenc}    % use 8-bit T1 fonts
\usepackage{hyperref}       % hyperlinks
\usepackage{url}            % simple URL typesetting
\usepackage{booktabs}       % professional-quality tables
\usepackage{amsfonts}       % blackboard math symbols
\usepackage{nicefrac}       % compact symbols for 1/2, etc.
\usepackage{microtype}      % microtypography
\usepackage{lipsum}		% Can be removed after putting your text content
\usepackage{graphicx}
\usepackage{doi}

\bibliographystyle{abbrvnat}

\newtheorem{example}{Example}

\newtheorem{theorem}{Theorem}
\newtheorem{lemma}{Lemma}

\newcommand{\Real}{\mathbb{R}}

\title{Dispersal-induced growth in a time-periodic environment}

%\date{September 9, 1985}	% Here you can change the date presented in the paper title
%\date{} 					
% Or removing it

\author{Guy Katriel\\ Department of Applied Mathematics, ORT Braude College,\\ Karmiel, Israel\\}

% Uncomment to remove the date

% Uncomment to override  the `A preprint' in the header
%\renewcommand{\headeright}{Technical Report}
%\renewcommand{\undertitle}{Technical Report}
%\renewcommand{\shorttitle}{\textit{arXiv} Template}

\date{}

\begin{document}

\maketitle

\begin{abstract}
Dispersal-induced growth (DIG) occurs when two populations with time-varying growth rates, each of which, when isolated, would become extinct, are able to persist and grow exponentially 
when dispersal among the two populations is present. 
This work provides a mathematical exploration of this surprising phenomenon, in the context of a deterministic model with periodic variation of growth rates, and characterizes the factors which 
are important in generating the DIG effect, and  the corresponding conditions
on the parameters involved.
\end{abstract}

% keyword\theta can be removed
%\keyword\theta{First keyword \and Second keyword \and More}

\section{Introduction}

Exploring how the dispersal of organisms interacts with
environmental heterogeneity, both spatial and temporal, to determine population 
growth, is a central theme in ecological theory,
with important implications for environmental management and conservation \cite{baguette,cousens,hanski,lewis}.
Many plant and animal populations inhabit 
separate patches of varying size and quality, which are inter-connected by dispersal. 
A patch is called a {\it{source}} if it can sustain a population, and a {\it{sink}} if it is 
of such low quality that a population would not persist on it, if isolated. 
A basic  insight of source-sink theory is that populations in sinks may be sustained, and even exhibit 
positive growth rates, as a result of immigration from source patches \cite{dias,kawecki,pulliam}. A
more surprising phenomenon is that of Dispersal-induced Growth (DIG), whereby it is 
possible for populations in a set of patches, with dispersal among them, to persist and grow {\it{despite}} the fact that {\it{all}} these patches are sinks. This counter-intuitive effect was first explicitly discussed, in different frameworks, in \cite{jansen,roy}.  \cite{jansen} used a simplified `well-mixing'
model with stochastic environment, and occurrence of the DIG effect was derived.
\cite{roy},
who used the term `inflationary effect' for what we here call dispersal-induced growth, 
modelled direct dispersal among patches, with stochastic growth rates which are temporally  {\it{positively autocorrelated}}, and the possibility of the DIG effect was derived by heuristic arguments and demonstrated by extensive numerical simulations. 
\cite{matthews} experimentally confirmed the
DIG phenomenon in a laboratory system using {\it{Paramecium aurelia}}. See also \cite{cheong,williams} for surveys and discussions of `paradoxical' effects in population biology, in which coupling of losing strategies can
lead to persistence and growth. 

In the continuous-time deterministic context, the DIG phenomenon was discussed and numerically demonstrated by \cite{klausmeier}, using a simple two-patch model leading to a pair of ordinary differential 
equations, with periodic growth rates - see equations (\ref{eq1}),(\ref{eq2}) below.
The present work is devoted to the mathematical analysis of this model and its generalization to multiple patches.
Periodically varying growth rates can either be thought of as a proxy for auto-correlated environmental fluctuations as assumed in the stochastic model  of \cite{roy}, or they can model seasonal variations in the quality of patches, another ubiquitous ecological mechanism \cite{white}. The DIG phenomenon is manifested when, despite the fact that the 
time-averaged growth rate in each patch is negative, which would lead to extinction in each patch if it were isolated, dispersal among the   patches allows the populations to persist and grow (see Figure \ref{exinf} below). Such persistence may be 
desirable in the context of species conservation, or undesirable, as in the case of invasive species or pathogens.
In the recent work \cite{kortessis} the same model is obtained 
as a linearization of an SIR epidemic model, which is used to numerically demonstrate that epidemic control through non-pharmaceutical interventions may be hampered by the fact that control measures are applied in a non-synchronized manner in two regions which are inter-connected by flows of infective individuals, leading to persistence of a pathogen that would have been eradicated if movement between the two regions had been curtailed, or if the control measures had been synchronized among the regions.

A better understanding of the DIG 
phenomenon, beyond numerical simulations, 
requires mathematical analysis of relevant models.
For stochastically varying growth,
several researchers have obtained analytical results regarding the DIG effect. In the special case
of `well-mixed' systems in which dispersing individuals join a common pool from which they disperse to all patches, a simple and elegant analytical treatment is available \cite{bascompte,jansen,metz}.
The case of direct and limited dispersal among patches is considerably more 
difficult. \cite{morita} analyze
a discrete-time two-patch model, showing that the ratio of populations
in the two patches converges to a stationary distribution characterized 
by a Perron-Frobenius equation which can be solved numerically, and
in terms of which the total growth rate of the populations can be computed. \cite{evans} analyze a continuous-time 
stochastic model, obtaining an explicit expression for the stationary distribution in the two-patch case, from which the population growth rate
can be calculated, and providing explicit conditions for the 
occurrence of dispersal-induced growth. In the above-mentioned analytical works, it is assumed that
there is no auto-correlation in the environmental variation. 
\cite{schreiber} obtains an analytical approximation of the 
population growth rate for the case of many patches
in a discrete-time model including auto-correlation of the 
time-dependent growth rates, which shows that positive autocorrelation enhances the total population growth rate and 
thus the possibility of DIG, while spatial correlation among patches reduces this effect, in agreement with the 
numerical findings of \cite{roy}.

In the context of time-periodic, rather than stochastic environmental variation, \cite{bansaye} analyze
discrete-time models, in which the environment 
varies periodically between two states, and conditions
for occurrence of DIG are given. Recent work of \cite{benaim}
studies the DIG phenonemonon in a two-patch model in which the growth rates switch between two values, both in a periodic and stochastically.

In recent years there has been significant progress in 
analyzing models with periodic environmental variation in 
the {\it{continuous space}} case, in which movement of organisms is modeled by diffusion. In particular, \cite{liu19} have proved an important result on the 
monotonicity of the principal eigenvalue of such problems 
in dependence on the frequency of environmental forcing (see also \cite{liu22,Su}).
The recent work \cite{liu} has proved an analog of the 
monotonicity result of \cite{liu19} in the case of 
discrete patches, and also analyzes the asymptotics 
of the principal eigenvalues in the limits of low and 
of high frequency. 

In this work we will employ the results
of \cite{liu} to provide a detailed analysis of the DIG 
phenomena and the conditions under which it arises.
We show that occurrence of 
the DIG effect depends on an appropriate balance of {\it{three}} factors:
\begin{itemize}
	\item[(i)] Difference in the time-dependence of the growth rates in the different patches: the DIG
	effect {\it{cannot}} occur when all time-dependent growth rates are identical or sufficiently similar.
	
	\item[(ii)]	 Frequency of the variation in 
	growth rates: this frequency must be sufficiently {\it{small}} for 
	DIG to occur.
	
	\item[(iii)] Rate of dispersal: for DIG to occur, dispersal must be neither too 
	weak nor too strong. This was also the case in the stochastic simulations reported in \cite{roy}, and the analytical results in \cite{schreiber}. Thus, dispersal is a `double-edged sword' \cite{abbott,hudson} - its positive effect
	on population growth is supressed if its level is too high.
\end{itemize}
The precise formulations of our results are given in 
Section \ref{mainres}. The results are illustrated by means 
of numerical simulations and computations of the ranges of parameters for which DIG occurs in specific examples.
Preliminaries to the proof of the main theorems 
are provided in Section \ref{growth_rate}, in which we characterize the growth rates of the species in several ways.
In Section \ref{low_high} we we analyze the
behavior of the growth rate in two asymptotic regimes - the low and the high frequency limits, using the results of \cite{liu}. This analysis forms 
the basis for the proofs of our main theorems, given in Section
\ref{mainproof}.

Our results will explain central features of the observations made using numerical simulations, and 
in particular show that DIG is a robust 
phenomenon occuring for general periodic growth-rate profiles, as long as the parameters involved are in appropriate ranges. 
Some questions for further research are proposed in the Discussion.

\section{The main results}
\label{mainres}

\subsection{The model}

We consider populations of sizes $x_i(t)$ ($1\leq i\leq N$), inhabiting $N$ patches, and subject to time-periodic local growth rates $r_i(\omega t)$ ($1\leq i\leq N$), where it is assumed that $r_i(\theta)$ are $2\pi$-periodic 
functions, so that $r_i(\omega t)$ are periodic with 
period $T=\frac{2\pi}{\omega}$. We also assume dispersal 
among the patches $i,j$ ($i\neq j$) at rate $m\cdot L_{ij}$, where the the parameter $m\geq 0$ is used to control the dispersal rate of the species described, and 
the numbers $L_{ij}=L_{ji}\geq 0$ ($i\neq j$) encode the topology of the dispersal network and the relative rates of dispersal among different patches. We then have the differential equations
\begin{equation}\label{eqsys}
	x_i'=r_i(\omega t)x_i+m\sum_{j\neq i}L_{ij}(x_j-x_i),\;\;\;1\leq i\leq N. 
\end{equation}
Equivalently, defining the diagonal elements of the matrix $L$ by
\begin{equation}\label{diagonal}L_{ii}=-\sum_{j\neq i} L_{ij},\;\;\;1\leq j\leq N,\end{equation}
and setting
$$\bold{x}(t)=\left(\begin{array}{c}
	x_1(t)	\\
	\vdots	\\
	x_N(t)
\end{array} \right),\;\;\;\;R(\theta)=\left(\begin{array}{ccc}
	r_1(\theta)& 0&  0\\
	0 &  \ddots& 0 \\
	0 & 0 & r_N(\theta)
\end{array} \right),$$
the system (\ref{eqsys}) can be written as
%\begin{equation}\label{eqsys1}
%	x_i'=r_i(\theta)x_i(\theta)+\sum_{j=1}^NL_{ij}x_j,\;\;\;1\leq i\leq N. 
%\end{equation}
\begin{equation}\label{eqsys1}
	\bold{x}'=[R(\omega t)+mL]\bold{x}.
\end{equation}
In addition to the assumptions that $L$ is 
symmetric, has non-negative non-diagonal elements, and
(\ref{diagonal}), we also make the standing assumption that $L$ is {\it{irreducible}}, which means that the dispersal network among the patches is {\it{connected}} (any two patches are connected by a path).

\begin{example}\label{ex1}
	The simplest two-patch case ($N=2$) is already of much interest, and it allows to obtain some explicit formulas which cannot be obtained in the general case (see examples \ref{ex2},\ref{ex3} below, and Section \ref{reduction}). In this case 
	$L_{12}=L_{21}=1$, $L_{11}=L_{22}=-1$, and the system (\ref{eqsys}) is
	\begin{equation}\label{eq1}x_1'= r_1(\omega t)x_1 +m (x_2- x_1),\end{equation}
	\begin{equation}\label{eq2}x_2'= r_2(\omega t)x_2 +m (x_1- x_2).\end{equation}
\end{example}
The model studied is linear, as it 
does not take into account density-dependent 
effects. However, the study of this system 
is also directly relevant to the understanding of more elaborate models including nonlinearity, since persistence of populations 
in such models depends on the behavior of the system obtained by linearization around the trivial equilibrium $\bold{x}=0$, which brings us back to the (\ref{eqsys1}). Therefore our results entail the occurrence of the DIG effect in nonlinear models (e.g. the epidemic model of \cite{kortessis}).

Any solution of (\ref{eqsys1}) with 
$x_i(0)>0$ ($1\leq i\leq N$) satisfies 
$x_i(t)>0$ for all $t>0$ (see Section \ref{principal}).
Given a function $x:[0,\infty)\rightarrow (0,\infty)$ we will
denote its growth rate (Lyapunov exponent) by
$$\Lambda[x]=\lim_{t\rightarrow \infty}\frac{1}{t}\ln(x(t)),$$
provided this limit exists. Note that $\Lambda[x]>0$ 
corresponds to exponential growth, while $\Lambda[x]<0$ 
corresponds to exponential decay - leading to extinction. Therefore our investigation focuses 
on the quantities $\Lambda[x_i]$.

In the absence of dispersal ($m=0$) the population in each patch would evolve independently, and the differential equations are easily solved to yield
\begin{equation}\label{dc}x_i(t)=x_i(0)e^{\int_0^t r_i(\omega \tau)d\tau},\;\;1\leq i\leq N,\end{equation}
leading to 
\begin{equation}\label{m0}\Lambda[x_i]=\lim_{t\rightarrow\infty} \frac{1}{t}\int_0^t r_i(\omega \tau)d\tau=\bar{r}_i,\end{equation}
where
$$\bar{r}_i=\frac{1}{2\pi}\int_0^{2\pi}r_i(\theta)d\theta,\;\;1\leq i \leq N$$
are the {\it{local}} average growth rates in each of the patches. Patch $i$ is called a {\it{source}} if $\bar{r}_i>0$ and 
a {\it{sink}} if $\bar{r}_i<0$.

The study of growth rates $\Lambda[x_i]$ when the patches are coupled 
through dispersal ($m>0$) is more difficult than in the uncoupled case, since the equations (\ref{eqsys1}) cannot be solved in closed form. A fundamental fact is that, when dispersal is present, the growth rates of all components
$\Lambda[x_i]$ are equal, and moreover 
they do not depend on the initial condition - see Section \ref{principal}. We will therefore 
denote the growth rate $\Lambda[x_1]=\Lambda[x_2]=\cdots =\Lambda[x_N]$ corresponding to the system (\ref{eqsys1}) by $\Lambda = \Lambda(m,\omega)$.

We will say that {\it{dispersal-induced growth}} (DIG) occurs 
if all patches are sinks ($\bar{r}_i<0,\;\;1\leq i\leq N$), but  $\Lambda(m,\omega)>0$.
This means that each of the populations would become extinct if isolated, but dispersal, at an appropriate rate, induces
exponential growth in all patches.

\begin{figure}
	\begin{center}
		\includegraphics[width=0.3\linewidth]{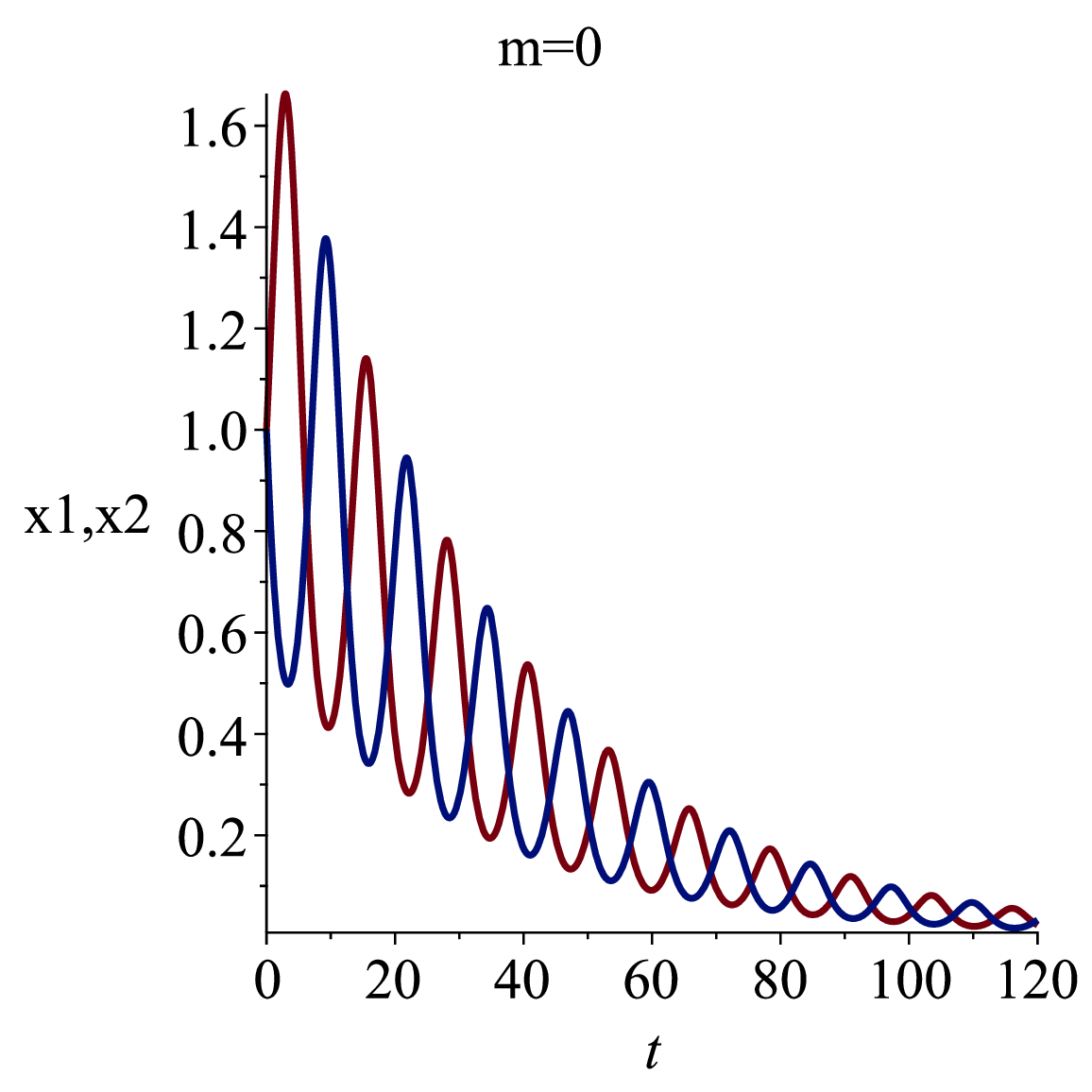}
		\includegraphics[width=0.3\linewidth]{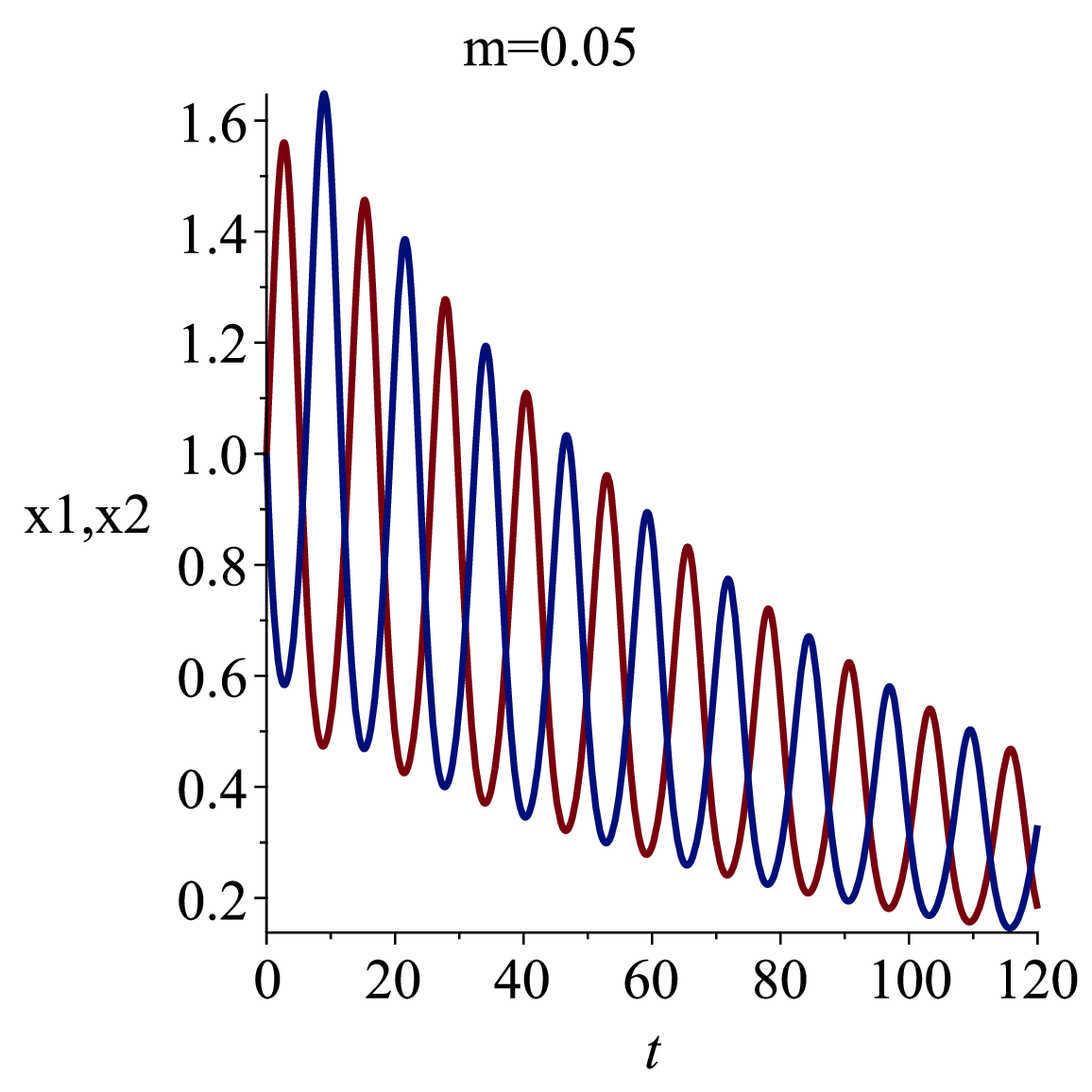}\\
		\includegraphics[width=0.3\linewidth]{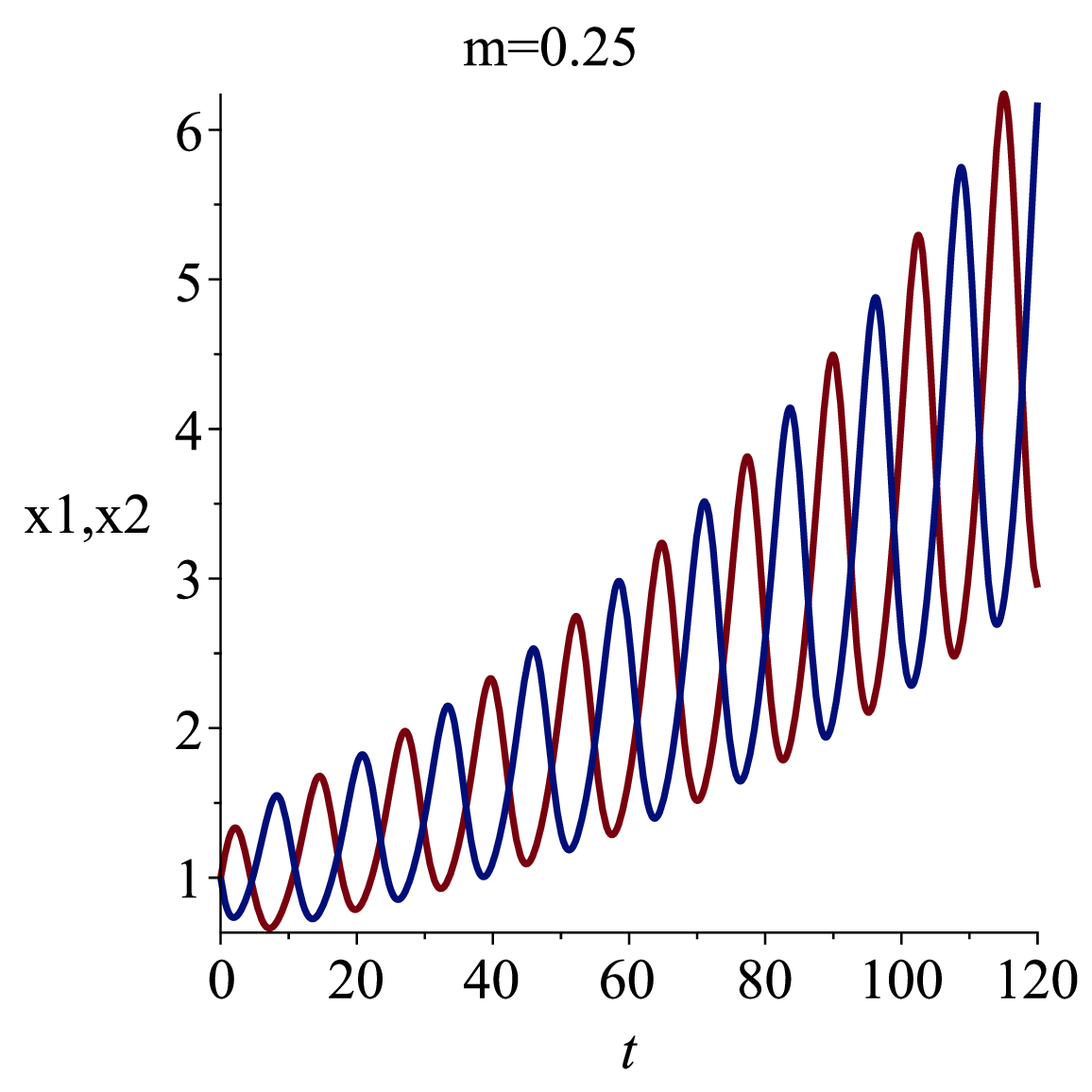}
		\includegraphics[width=0.3\linewidth]{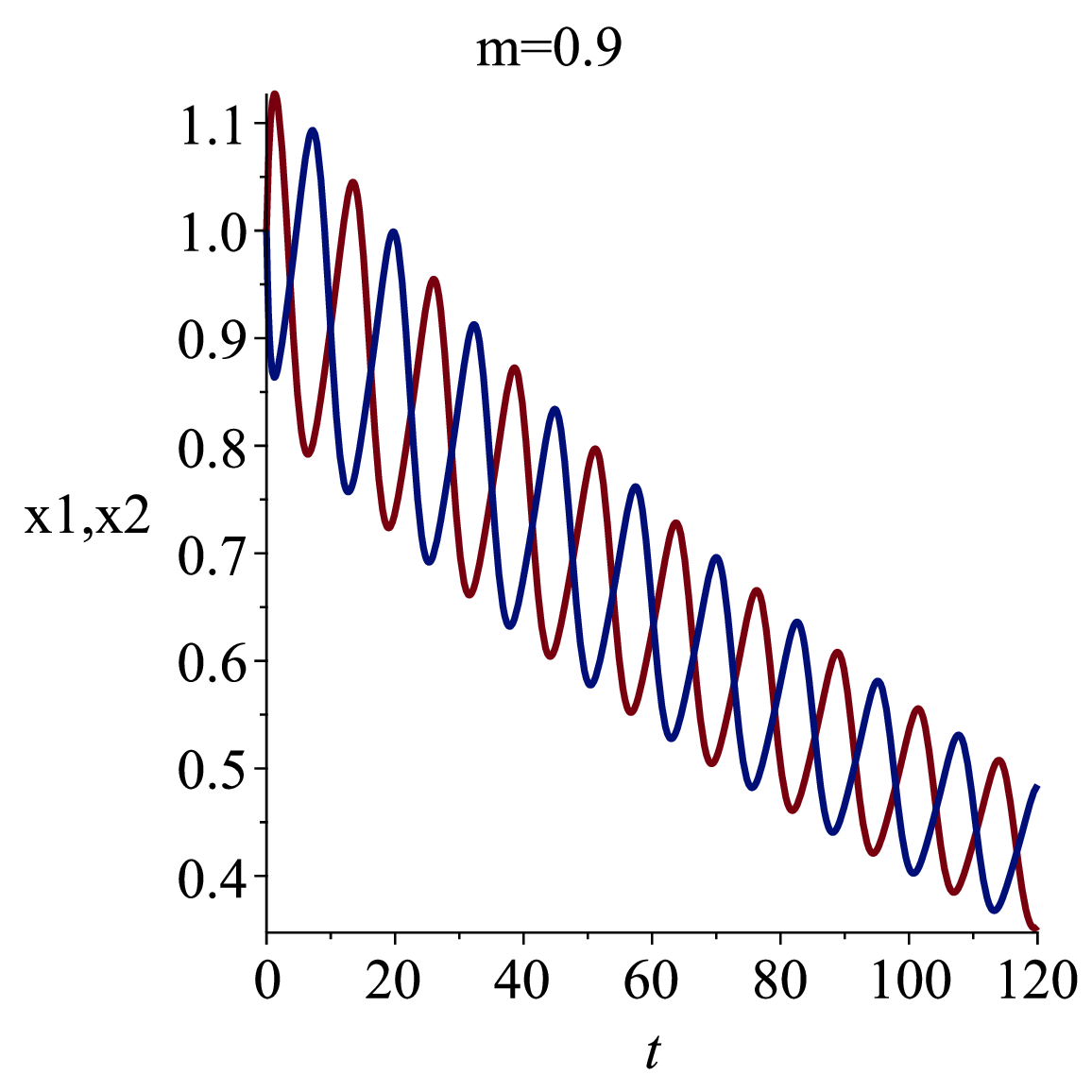}
		\caption{Solutions of (\ref{eq1}),(\ref{eq2}), with
			$r_1(\omega t)=-0.03+0.3\cos(\omega t),\;\;r_2(\omega t)=r_1(\omega t-\pi)$,
			with $\omega=0.5$, for different values of the dispersal rate $m$.}
		\label{exinf}
	\end{center}
\end{figure}

In the special case in which growth rates are constant in time ($R(\theta)\equiv R$), the growth rate $\Lambda$ is the largest eigenvalue of the 
matrix $R+mL$, and in this case we have
$\Lambda \leq  \max_{1\leq i\leq N}\bar{r}_i$ (see Lemma \ref{matrixa}(iii) below),
and in particular $\Lambda<0$ when all patches are sinks. Thus the DIG phenomenon {\it{cannot}} occur
when growth rates do not vary in time.

Studying the growth or decay
of the solutions of the system (\ref{eqsys1}) can be formulated as a question of {\it{Floquet Theory}} (\cite{chicone, hale,klausmeier}). Any system of linear differential 
equations with periodic coefficients has associated {\it{Floquet exponents}}, and indeed, in the notation used here, the value
$\Lambda$ is precisely the maximal
Floquet exponent - whose sign determines the growth or decay of the solutions (see Section \ref{principal}). However,  in contrast to time-independent 
(autonomous) systems, studying the Floquet exponents of periodic systems analytically is 
challenging, and our aim is to perform such
a study for the particular system of interest here.

%We obtain explicit expressions for the limiting
%value of $\Lambda(m,\omega)$ in two cases: $\omega \rightarrow 0$
%(low frequency) and $\omega\rightarrow \infty$ (high frequency),
%and these allow us to study the conditions under which
%dispersal-induced growth occurs.

\subsection{Some numerical results}
\label{numerical}

A numerical demonstration of the DIG effect is shown in Figure \ref{exinf}, in the case $N=2$, for a pair of periodic growth-rate profiles with $\bar{r}_1<0,\bar{r}_2<0$.
In the absence of dispersal ($m=0$), as well as for sufficiently weak dispersal ($m=0.05$), both populations decay, while in the presence of stronger dispersal ($m=0.25$) 
both populations grow - the DIG phenomenon. For
yet stronger dispersal ($m=0.9$) the populations once again decay.

In Figure \ref{psinksink}, we display a numerically-generated plot of the $(m,\omega)$ parameter-plane, for the same periodic growth profiles as in Figure \ref{exinf} - the computation  by which this figure was generated is explained in Section \ref{reduction}. The green region is
the set of parameter values for which $\Lambda(m,\omega)>0$, that is for which DIG occurs.
The lines are level curves of the function 
$\Lambda(m,\omega)$. 
As observed in the figure, there exists 
a value $m^*>0$ such that, if the dispersal rate satisfies  $m\in (0,m^*)$,
DIG occurs when the frequency $\omega$ of environmental variation is sufficiently small, $0<\omega <\omega_{c}(m)$, and does not occur if $\omega>\omega_{c}(m)$, or if $m\geq m^*$. 

Defining $\omega^*= \max_{m\in [0,m^*]}\omega_{c}(m)$, we see that if we fix a frequency $\omega \in (0,\omega^*)$ then DIG will occur for an intermediate range of values of $m$ - neither too weak nor too 
strong. If $\omega>\omega^*$ DIG will {\it{not}} occur 
for any dispersal rate.
The bottom part of Figure \ref{psinksink} shows the growth rate $\Lambda$ as 
a function of the dispersal rate $m$, for three values of the frequency, using the same profiles $r_1(\theta),r_2(\theta)$. Positive values of 
$\Lambda$, corresponding to DIG, occur for intermediate values of $m$.

A central aim of this work is to
obtain theoretical understanding of the features noted above, as observed in Figure \ref{psinksink}, that is to analytically derive them, thus providing a mathematical explanation for these features and proving that they are generic.

\subsection{The all-sink case}

We now present our main results for the case in which all patches are sinks ($\bar{r}_i<0$), and then discuss their
implications for characterizing the conditions under which DIG occurs.

\begin{theorem}[All-sink case] \label{sinksink} Assume
	$r_i(\theta)$ ($1\leq i\leq N$) are continuous
	$2\pi$-periodic functions, with
	$\bar{r}_i<0$. 
	Define
	$$r_{max}(\theta)=\max_{1\leq i\leq N} r_i(\theta)$$
	\begin{equation}\label{dchi}\chi = \frac{1}{2\pi}\int_0^{2\pi}r_{max}(\theta)d\theta.\end{equation}
	Then we have 
	\begin{equation}\label{limit}
		\lim_{m\rightarrow 0+}\lim_{\omega\rightarrow 0+}
		\Lambda(m,\omega)=\chi, 
	\end{equation}
	and
	
	(I) If $\chi<0$ then $\Lambda(m,\omega)<0$ (decay) for all $m>0,\omega>0$.
	
	(II) If $\chi>0$ then, defining 
	\begin{equation}\label{l0}\Lambda_0(m)=\frac{1}{2\pi}\int_0^{2\pi} \lambda(R(\theta)+mL)d\theta,\end{equation}
	where $\lambda(A)$ denotes the maximal eigenvalue of a symmetric matrix $A$, the equation
	\begin{equation}
		\label{al}
		\Lambda_0(m)=0,\;\; m>0
	\end{equation}
	has a unique solution $m=m^*>0$, and there exists a 
	continuous function $\omega_{c}:[0,m^*]\rightarrow [0,\infty)$, real-analytic on $(0,m^*)$,
	with $\omega_c(0)=\omega_c(m^*)=0$ and $\omega_c(m)>0$ for 
	$m\in (0,m^*)$, such that
	
	\begin{itemize}
		\item[a.] If $m\in (0,m^*)$ then 
		$\Lambda(m,\omega)>0$ (growth) for $\omega<\omega_c(m)$ and
		$\Lambda(m,\omega)<0$ (decay) for $\omega>\omega_c(m)$.
		
		\item[b.] If $m\geq m^*$ then $\Lambda(m,\omega)<0$
		(decay) for any $\omega>0$.  
	\end{itemize}
\end{theorem}

\begin{example}\label{ex2} In the case $N=2$, with (\ref{eq1}),(\ref{eq2}),
	it is easy to compute (\ref{l0}) explicitly and find that 
	\begin{equation}\label{l02}\Lambda_0(m)
		\doteq \frac{1}{2}\left[\bar{r}_1+\bar{r}_2+ \frac{1}{2\pi}\int_0^{2\pi}\sqrt{(r_1(\theta)-r_2(\theta))^2+4m^2} d\theta\right]-m.
	\end{equation}
	For example, taking $r_1(\theta),r_2(\theta)$ as in 
	Figure 2, we can use (\ref{l02}) and solve (\ref{al}) numerically, to find $m^*=0.7276..$ - this is the least  upper bound of values of $m$ for which DIG occurs, in agreement with the top part of Figure 2.
\end{example}

The proof of Theorem \ref{sinksink} is given in Section \ref{mainproof}, after we develop the needed tools in 
the sections \ref{growth_rate}, \ref{low_high}.
We now discuss the insights that this theorem gives into the DIG effect. This can be compared with
the observations regarding the numerical results made in Section \ref{numerical} above.
\begin{figure}
	\begin{center}
		\includegraphics[width=0.8\linewidth]{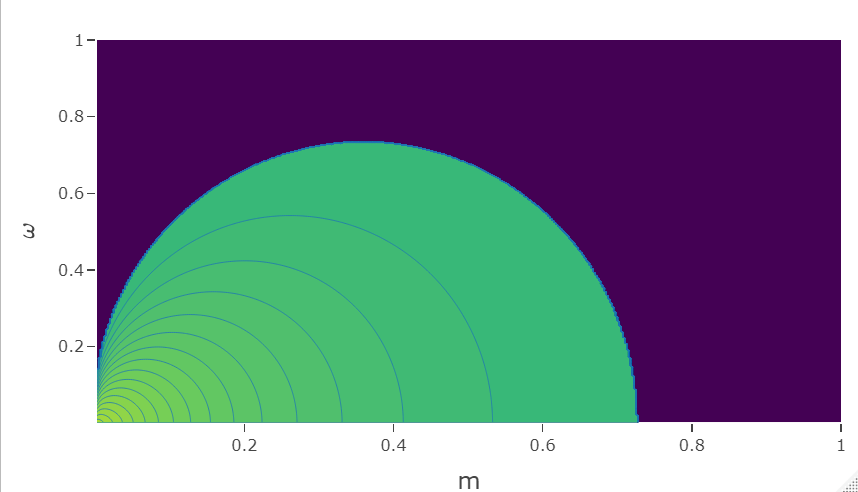}
		\includegraphics[width=0.8\linewidth]{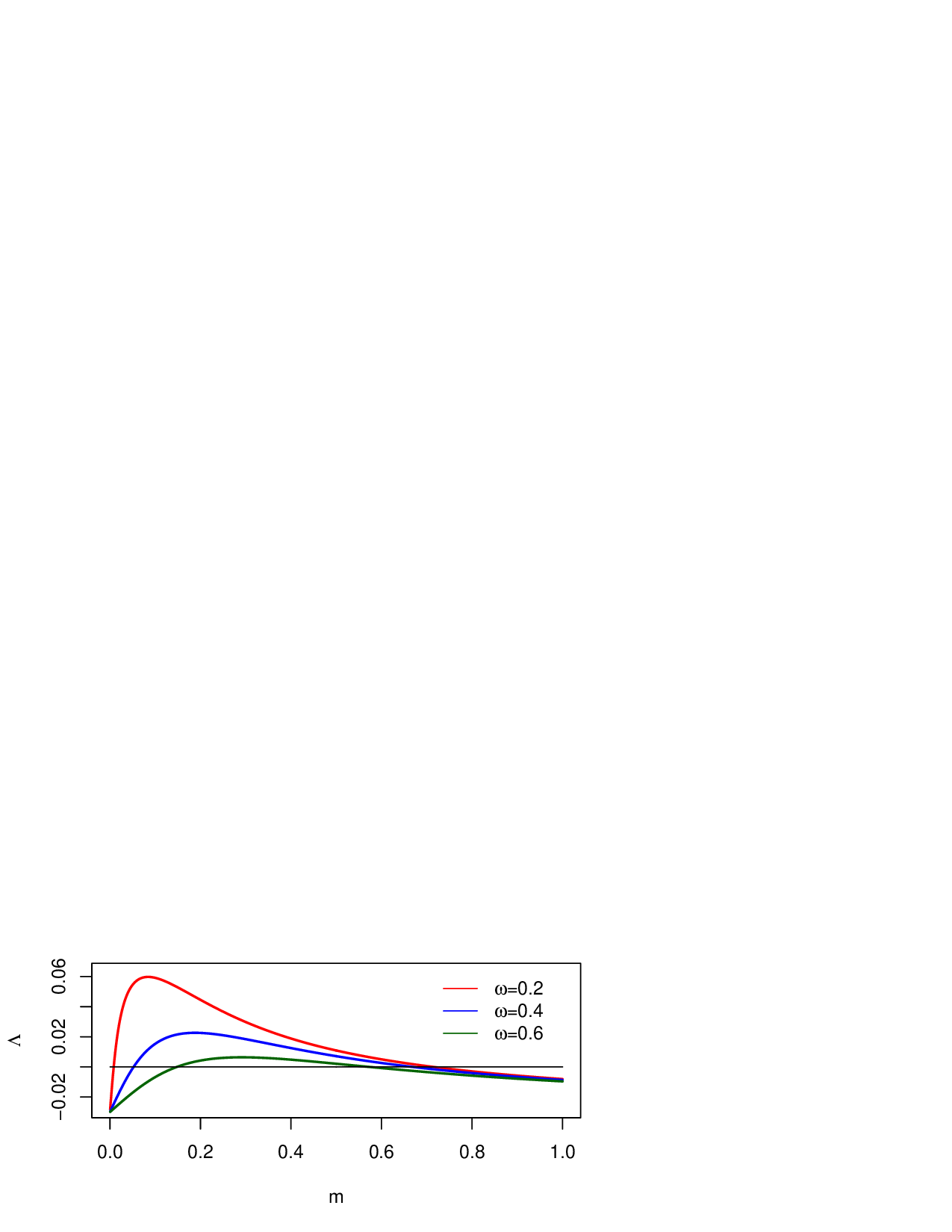}
		\caption{The two-sink case. Top: Regions of growth (green) and decay (dark) for solutions of  (\ref{eq1}),(\ref{eq2}) in the $(m,\omega)$ parameter plane. Also shown are level curves of the function $\Lambda(m,\omega)$.
			Bottom:	Growth rate $\Lambda$ of solutions of  (\ref{eq1}),(\ref{eq2}) as a function of the dispersal rate $m$ for three values of the frequency $\omega$.
			Here $r_1(\omega t)=-0.03+0.3\cos(\omega t),\;\;r_2(\omega t)=r_1(\omega t-\pi)=-0.03-0.3\cos(\omega t)$.
		}\label{psinksink}
	\end{center}
\end{figure}

\vspace{0.5cm}

(1) {\bf{The periodic growth profiles}}: 
The condition $\chi>0$, where $\chi$ is given by
(\ref{dchi}), is {\it{necessary}} for DIG to occur.
$\chi$ is the time-average of 
the maximum of the local growth rates at each point in time.
This condition entails in particular that:

(i) At least one of the local growth rates must be positive at {\it{some}} times. However it is possible for all but one of the local growth 
rates to be negative at {\it{all}} times, and indeed 
it is even possible for all but one of the local growth rates to be 
negative {\it{and}} constant (time-independent) - see Figure \ref{const} (left) for such an example.

(ii) None of the local growth rates is higher than all others at
all times. Indeed if, {\it{e.g.}}, $r_1(\theta)\geq r_i(\theta)$ for all $i$ and all $\theta$, then we have $r_{max}(\theta)=r_1(\theta)$, so $\chi\leq \bar{r}_1<0$.

(iii) The local growth rates $r_i(\theta)$ cannot be
too similar (in particular they cannot be identical). 
Indeed, if $\max_{\theta}| r_i(\theta)-r_j(\theta)| \leq \epsilon$  ($1\leq i,j\leq N$) then, for all $i$,
$$\chi=\frac{1}{2\pi}\int_0^{2\pi}r_{max}(\theta)d\theta\leq 
\frac{1}{2\pi}\int_0^{2\pi}(r_i(\theta)+\epsilon)d\theta=\bar{r}_i+\epsilon.$$
Thus, if $\epsilon<\max_{1\leq i\leq N} |\bar{r}_i|$ then
$\chi<0$. We therefore conclude that if
$$\max_{1\leq i,j\leq N}\max_{\theta}| r_i(\theta)-r_j(\theta)| <\max_{1\leq i\leq N}|\bar{r}_i|$$
then DIG {\it{cannot}} occur. Note, however, that it is possible to have
$\chi>0$ even when all $r_i(\theta)$ are phase-synchronized, so 
that DIG can occur even if the same seasonal effect 
acts in all patches, as long as the strength of this effect is not identical in all patches - see Figure \ref{const} (right) for an example.

\begin{figure}
	\begin{center}
		\includegraphics[width=0.45\linewidth]{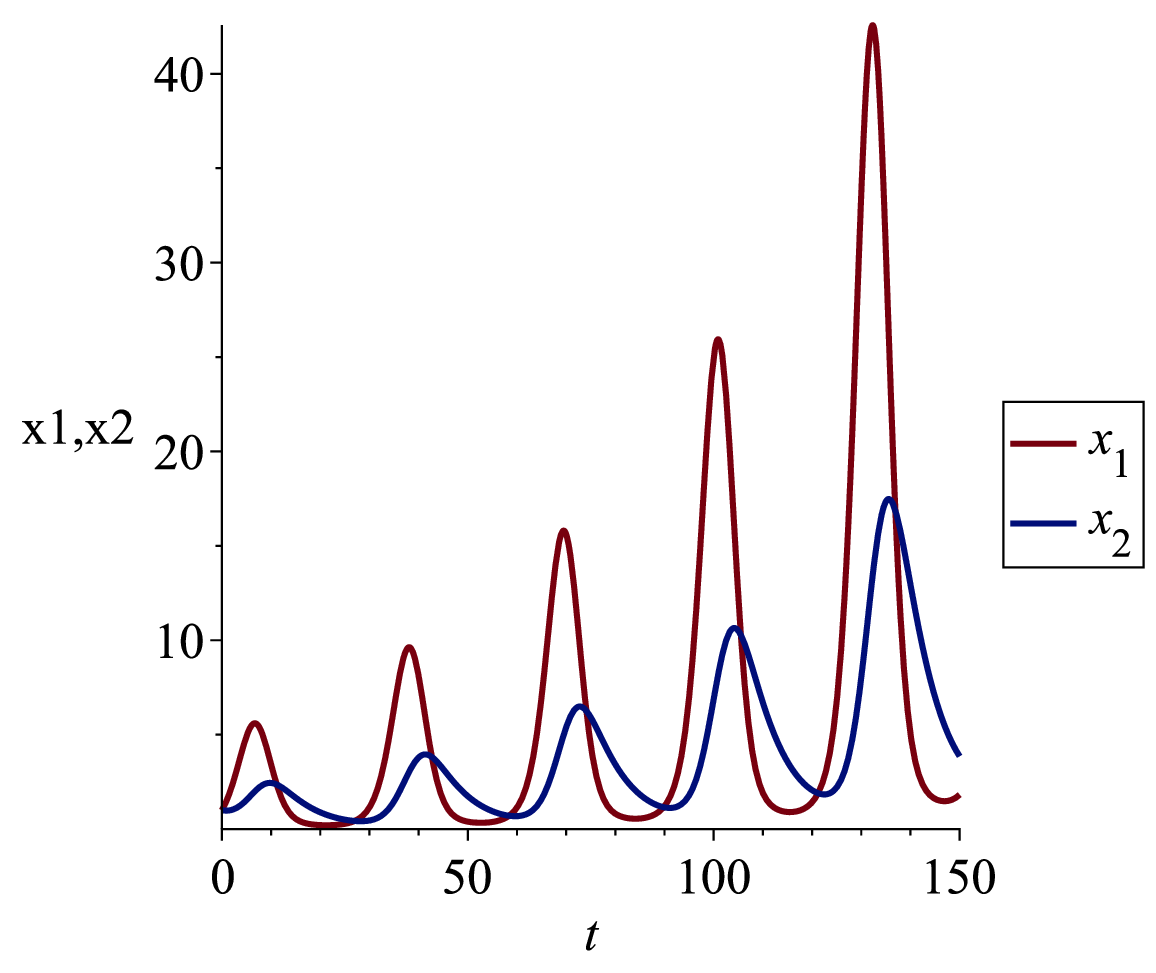}
		\includegraphics[width=0.45\linewidth]{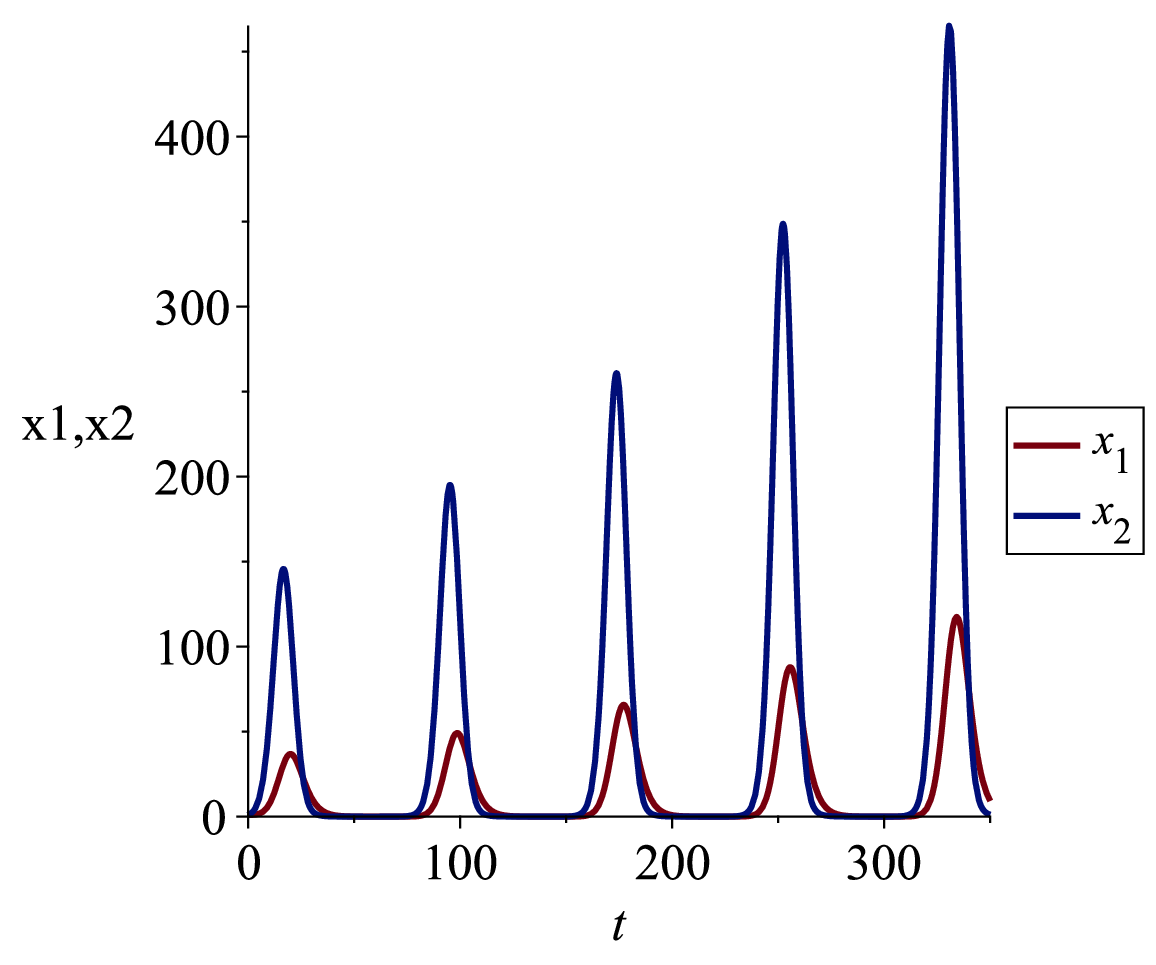}
		\caption{Left: Example of the DIG effect when growth rate in one of two 
			patches is constant: $r_1(\omega t)=-0.05,\;\;r_2(\omega t)=-0.05+0.5\cos(\omega t)$,
			$\omega=0.2,m=0.1$.
			Right: Example of DIG when the seasonal effect in two patches is phase-synchronized:  $r_1(\omega t)=-0.1+0.1\cos(\omega t),\;\;r_2(\omega t)=-0.1+0.6\cos(\omega t)$,
			$\omega=0.08,m=0.05$.
		}\label{const}
	\end{center}
\end{figure}

(2) {\bf{Rate of dispersal}}. Assuming that 
$\chi>0$, part (II) of Theorem \ref{sinksink} 
implies that when $m\in (0,m^*)$ DIG will occur for $\omega<\omega_c(m)$.
Thus a dispersal rate which is too large 
($m> m^*$) will prevent the DIG effect from occurring, but dispersal rates
$m<m^*$ will induce DIG {\it{provided}}
that the frequency $\omega$ is sufficiently small.
The qualification in the previous sentence is essential: if we {\it{fix}} $\omega>0$, then, since $\omega_c(0)=0$,
we will have $\omega_c(m)<\omega$ for $m$ sufficiently small, implying that $\Lambda(m,\omega)<0$, so that DIG does not occur.

(3) {\bf{Frequency of oscillations}}: 
If $\omega>\omega^*=\max_{m\in [0,m^*]}\omega_c(m)$, 
part (II) of Theorem \ref{sinksink} implies that
DIG will not occur. Thus, while time-variation of at least one growth rate is essential for DIG, the frequency of this variation cannot be too high.

\subsection{The Source-Sink case}

\begin{figure}
	\begin{center}
		\includegraphics[width=0.8\linewidth]{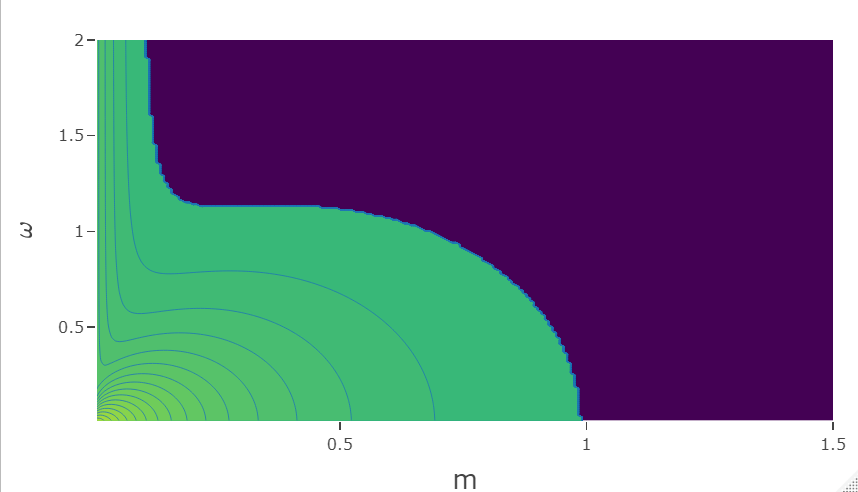}
		\caption{The Source-Sink case: Regions of growth (green) and decay (dark) for solutions of  (\ref{eq1}),(\ref{eq2}) in the $(m,\omega)$ parameter plane. Also shown are level curves of the function $\Lambda(m,\omega)$. Here $r_1(\omega t)=-0.1+0.3\cos(\omega t),\;\;r_2(\omega t)=0.05-0.3\cos(\omega t)$.
		}\label{psourcesink}
	\end{center}
	
\end{figure}

Although our main interest is in the DIG effect, which involves the all-sink 
case, we complement our analysis with 
a treatment of the source-sink case, 
using the same methods.
Our results in this case are given by

\begin{theorem}[Source-Sink case]\label{sourcesink} Assume
	$r_i(\theta)$ ($1\leq i\leq N$)  are continuous
	$2\pi$-periodic functions, with
	\begin{equation}\label{pn}\min_{1\leq i\leq N}\bar{r}_i<0,\;\;\;\max_{1\leq i\leq N}\bar{r}_i>0.\end{equation}
	Denote
	\begin{equation}\label{ra}\bar{R}=\left(\begin{array}{ccc}
			\bar{r}_1	& 0 & 0 \\
			0	& \ddots &  0\\
			0	& 0 & \bar{r}_N
		\end{array} \right),\;\;\;\;\;\bar{r}=\frac{1}{N}\sum_{i=1}^n \bar{r}_i.\end{equation}
	
	(I) If $\bar{r}> 0$ then for any $m>0, \omega>0$ we have $\Lambda(m,\omega)>0$ (growth).

	(II) If $\bar{r}<0$ then the equation (\ref{al})
	has a unique solution $m=m^*>0$. 
	
	Defining
	\begin{equation}\label{domi}\Lambda_\infty(m)=\lambda\left(\bar{R}+mL\right),
	\end{equation}
	where $\lambda(A)$ denotes the maximal eigenvalue of a symmetric matrix $A$, the equation
	\begin{equation}\label{li0}
		\Lambda_\infty(m)=0,\;\;\;m>0
	\end{equation}
	has a unique solution, which we denote by $\hat{m}$, and
	we have $\hat{m}\leq m^*$. Unless  $r_i(\theta)-r_j(\theta)$ are constant for all $i,j$, we have strict inequality $\hat{m}<m^*$, and
	there exists a 
	continuous function $\omega_{c}:(\hat{m},m^*]\rightarrow [0,\infty)$, real-analytic on $(\hat{m},m^*)$,
	with $\lim_{m\rightarrow \hat{m}+}\omega_c(m)=+\infty$, $\omega_c(m^*)=0$,
	and $\omega(m)>0$ for $m\in(\hat{m},m^*)$,
	such that
	\begin{itemize}
		\item[a.] If $m\in (0,\hat{m}]$ then for any $\omega>0$ we have $\Lambda(m,\omega)>0$ (growth).
		
		\item[b.] If $m\in (\hat{m}, m^*)$ then for $\omega \in (0,\omega_c(m))$ we have $\Lambda(m,\omega)>0$ (growth) and 
		for $\omega>\omega_c(m)$ we have $\Lambda(m,\omega)<0$
		(decay).
		
		\item[c.] If $m\geq m^*$ then for all $\omega>0$ we have $\Lambda(m,\omega)<0$
		(decay).
	\end{itemize}
	
\end{theorem}

The proof of Theorem \ref{sourcesink} is given in Section \ref{mainproof}.

Part (I) of the above theorem says that when the mean of the 
time-averaged local growth rates is positive, growth always occurs.
Part (II) says that growth may occur also when the mean of 
the time-averaged growth rates is negative, and -  
in contrast with the all-sink case - here, if $m$ is sufficiently
small ($m<\hat{m}$), we have $\Lambda(m,\omega)>0$ (growth) for {\it{all}} $\omega>0$. This can be seen 
in the parameter-plane diagram in Figure \ref{psourcesink}, obtained numerically. 
Let us note that the condition $m<\hat{m}$ 
is precisely the condition for growth
in the case in which the local growth rates $r_i(\theta)$ are {\it{constant}}, with 
values $\bar{r}_i$ (with $\min_{1\leq i\leq N}\bar{r}_i<0$, 
$\max_{1\leq i\leq N}\bar{r}_i>0$), that is the condition under 
which the largest eigenvalue of the matrix
$\bar{R}+mL$ is positive. However, unless all
$r_i(\theta)-r_j(\theta)$ are constant, we have $\hat{m}<m^*$, and the theorem
implies that, when the frequency is sufficiently small, the time periodic system also displays growth for parameter 
values $m\in (\hat{m},m^*)$ for which the corresponding time-averaged system leads to decay.

We note that the special case in which all $r_i(\theta)-r_j(\theta)$ are constant, which was excluded in the above theorem, is a trivial one: in this case we have
$R(\theta)=\bar{R}+r_0(\theta)I$, where $r_0(\theta)$ 
is $2\pi$-periodic and satisfies $\int_0^{2\pi}r_0(\theta)d\theta=0$, and then (\ref{eqsys1}) has the explicit solution $\bold{x}(t)=e^{\int_0^t r_0(\omega s)ds}e^{t(\bar{R}+mL)}\bold{x}(0)$, from which it follows that $\Lambda(m,\omega)=\lambda(\bar{R}+mL)=\Lambda_\infty(m)$ for all $\omega$, so that the growth rate is identical to that of the corresponding time-averaged system for all $(m,\omega)$.

\begin{example}\label{ex3}
	In the case $N=2$, with (\ref{eq1}),(\ref{eq2}),
	and assuming the source-sink case $\bar{r}_1\bar{r}_2<0$, it is easy to compute (\ref{domi}) explicitly and find that 
	$$\Lambda_\infty(m)=\frac{1}{2}\left[\bar{r}_1+\bar{r}_2+\sqrt{(\bar{r}_1-\bar{r}_2)^2+4m^2} \right] -m.
	$$
	The solution $\hat{m}$ of (\ref{li0}) is 
	then given by
	\begin{equation}\label{m2}\hat{m}=\left(\frac{1}{\bar{r}_1}+\frac{1}{\bar{r}_2} \right)^{-1}.\end{equation}
	For example, taking $r_1(\theta),r_2(\theta)$ as in Figure 4, (\ref{m2}) gives $\hat{m}=0.1$,
	so that by Theorem \ref{sourcesink}, for $m\leq{\hat{m}}$ we have $\Lambda(m,\omega)>0$
	for all $\omega$.
	Using (\ref{l02}) and solving (\ref{al}) numerically, we find $m^*=0.990..$. Thus for
	$\hat{m}<m<m^*$ we have   $\Lambda(m,\omega)>0$
	for $\omega$ small, and  $\Lambda(m,\omega)<0$
	for $\omega$ large, and for $m>m^*$ we always have $\Lambda(m,\omega)<0$. All these results are in agreement with Figure 4.
\end{example}

\section{Characterizations of the growth rate}
\label{growth_rate}

In this section we characterize the growth rate $\Lambda(m,\omega)$  corresponding to (\ref{eqsys1}) in different ways, each of which has its uses: as the dominant eigenvalue of 
a monodromy matrix, as the principal eigenvalue of a periodic problem, and, in the case of two patches, via an integral related to a periodic solution of an associated  nonlinear scalar differential equation.  

\subsection{The growth rate as a principal eigenvalue}
\label{principal}

The fundamental solution corresponding to (\ref{eqsys1}) is 
the matrix function $X(t)$ satisfying
$$X'(t)=[R(\omega t)+mL]X(t),\;\;\;X(0)=I,$$
where $I$ is the identity matrix.
Since the matrix $R(\omega t)+mL$ is an irreducible cooperative matrix (i.e. has non-negative non-diagonal entries), the Kamke-M\"uller theorem \cite{hirsch} implies that the matrices $X(t)$ have positive entries - so that 
solutions of (\ref{eqsys1}) with positive initial conditions remain positive for all time.
The Perron-Frobenius 
theorem implies that the 
matrix $X(T)$ ($T=\frac{2\pi}{\omega}$), known as the monodromy
matrix, has a dominant eigenvalue (an eigenvalue of maximal modulus) $\rho=\rho(X(T))$ which is positive and simple, and the corresponding eigenvector $\bold{v}$ has positive entries and is the only positive eigenvector of $X(T)$. 
By Floquet's Theorem (see, {\it{e.g.}}, \cite{chicone}, Theorem 2.83)
we have $X(t+T)=X(t)X(T)$ for all $t$.
Therefore, defining 
\begin{equation}\label{lambda}\lambda=\frac{1}{T}\ln \left( \rho(X(T))\right),\;\;\varphi(t)=e^{-\lambda t}X(t)\bold{v},\end{equation}
($\lambda$ is known as the Floquet exponent),  we have that 
$$\varphi\left(t+T\right)=e^{-\lambda \left(t+T \right)}X\left(t+T \right)\bold{v}=\frac{1}{\rho}e^{-\lambda t}X(t)X(T)\bold{v}=e^{-\lambda t}X(t)\bold{v}=\varphi(t),$$
$$\varphi'(t)=-\lambda e^{-\lambda t}X(t)\bold{v}+e^{-\lambda t}X'(t)\bold{v}=[R(\omega t)+mL]\varphi(t)-\lambda \varphi(t).
$$
Thus $\lambda$ given by (\ref{lambda}) is the 
principal eigenvalue (the one with largest real part) of the periodic problem
$$\varphi'(t)=[R(\omega t)+mL]\varphi(t)-\lambda \varphi(t),\;\;\;\;\varphi\left(t+T\right)=\varphi(t).
$$
Moreover, we have that, for any positive solution $\bold{x}(t)$ of (\ref{eqsys1}):
\begin{equation}\label{pert}\underline{C}\leq \frac{x_i(t)}{e^{\lambda t}\varphi_i(t)}\leq \overline{C},\;\;\;\;\;1\leq i\leq N,\end{equation}
where the positive constants 
$\underline{C},\overline{C}$ depend on the initial conditions - for an elegant proof of this fact using the relative entropy method see
\cite{perthame}, Sec. 6.3.2. (\ref{pert}), 
together with the periodicity of $\varphi(t)$ imply
\begin{equation}\label{flo}\Lambda[x_i]=\lim_{t\rightarrow\infty}\frac{1}{t}\ln(x_i(t))=\lambda.\end{equation}
In particular this shows that, under coupling, the growth rates in all patches are identical. We note also that by the analytic dependence of solutions of 
differential equations on parameters, and the fact that
$\rho(X(T))$ is a simple eigenvalue, (\ref{flo}) implies that the function $\Lambda(m,\omega)$ is real-analytic.

We can also normalize the period, setting 
$\varphi(t)=\bold{u}(\omega t)$, where $\bold{u}(\theta)$ is
$2\pi$-periodic, and we thus obtain
\begin{lemma}\label{connect}
	The growth rate $\Lambda(m,\omega)$ is given as 
	$$\Lambda(m,\omega)=\lambda,$$
	where $\lambda$ is the principal eigenvalue of the periodic problem
	\begin{eqnarray}\label{ep1}&&\omega \bold{u}'(\theta)=[R( \theta)+mL]\bold{u}(\theta)-\lambda \bold{u}(\theta),\nonumber\\
		&&\bold{u}(\theta+2\pi)=\bold{u}(\theta),
	\end{eqnarray}
	that is the eigenvalue with largest real part.
\end{lemma}

We now cite an important result from \cite{liu} 
which will play a significant role in the proofs of the main results. 

\begin{lemma}\label{monotone} For all $m>0,\omega>0$ we have $\Lambda_\omega'(m,\omega)\leq 0$. The inequality is strict
	except in the case that $r_i(\theta)-r_j(\theta)$ are constants for all $i,j$, so that
	$\Lambda(m,\omega)$ is strictly decreasing in $\omega$ for any fixed $m$.
\end{lemma}

In view of Lemma \ref{connect}, this result follows from
Part (ii) of Theorem 1.1 in \cite{liu} and Remark 1.2 following that theorem. Note that in the formulation of the results in \cite{liu} the principal eigenvalue is defined as the negative of the value  as defined here, and is thus increasing with respect to $\omega$.

\subsection{An associated scalar differential equation in the two-patch case}
\label{reduction}

In the case of two patches ($N=2$) we now obtain a formula for the growth rate $\Lambda$ in terms of 
the periodic solution of an associated nonlinear scalar differential equation. This formula, besides its intrinsic interest, has been useful for us in carrying out numerical computations.

\begin{lemma}
	If $(x_1(t),x_2(t))$  is any positive solution of (\ref{eq1}),(\ref{eq2}),
	then the function
	\begin{equation}\label{defz}z(t)=\frac{x_2(t)}{x_1(t)}.\end{equation}
	satisfies the differential equation
	\begin{equation}\label{dz}
		z'=(r_2(\omega t)-r_1(\omega t))z +m( 1-z^2).
	\end{equation}
\end{lemma}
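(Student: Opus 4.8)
The plan is to obtain the equation for $z$ by a direct differentiation of the quotient $z = x_2/x_1$, using the quotient rule and then substituting the right-hand sides of (\ref{eq1}) and (\ref{eq2}). Before doing this I would first record that $z$ is well-defined and continuously differentiable on the relevant time interval: this requires $x_1(t)\neq 0$, which follows from the positivity property noted just above the lemma (for solutions with $x_1(0)>0,x_2(0)>0$ one has $x_1(t)>0$ for all $t\ge 0$; more generally one works on any interval where $x_1$ does not vanish). Since $x_1,x_2$ are $C^1$ as solutions of a linear ODE system with continuous coefficients, $z$ is $C^1$ there.

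Next I would compute, by the quotient rule,
\begin{equation*}
z' = \frac{x_2' x_1 - x_2 x_1'}{x_1^2} = \frac{x_2'}{x_1} - \frac{x_2}{x_1}\cdot\frac{x_1'}{x_1}.
\end{equation*}
Substituting (\ref{eq2}) gives $x_2'/x_1 = r_2(\omega t)z + m(1-z)$, and substituting (\ref{eq1}) gives $x_1'/x_1 = r_1(\omega t) + m(z-1)$. Hence
\begin{equation*}
z' = r_2(\omega t)z + m(1-z) - z\bigl(r_1(\omega t) + m(z-1)\bigr).
\end{equation*}

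Finally I would expand and collect terms: the $r$-terms combine to $(r_2(\omega t)-r_1(\omega t))z$, and the $m$-terms give $m(1-z) - mz(z-1) = m(1-z) + mz(1-z) = m(1-z)(1+z) = m(1-z^2)$, which yields (\ref{dz}). There is essentially no obstacle here — the entire content is the elementary algebraic simplification — so the only thing to be careful about is the bookkeeping of signs in the last step and the (minor) point of justifying that $z$ is differentiable, i.e. that $x_1$ does not vanish, which is exactly what the positivity remark preceding the lemma supplies.
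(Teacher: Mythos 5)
Your proof is correct and is essentially the same as the paper's: both apply the quotient rule to $z=x_2/x_1$, substitute the right-hand sides of (\ref{eq1}),(\ref{eq2}), and simplify to obtain (\ref{dz}). The only addition you make is the explicit remark on differentiability and non-vanishing of $x_1$, which the paper leaves implicit.
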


\begin{proof}
	Dividing (\ref{eq1}),(\ref{eq2}) by $x_1$ we have
	\begin{eqnarray}\label{d2}\frac{x_1'(t)}{x_1(t)}&=& r_1(\omega t) +m z(t)-m,\\
		\frac{x_2'(t)}{x_1(t)}&=& r_2(\omega t)z(t) +m -m\nonumber z(t),\end{eqnarray}
	from which it follows that
	\begin{eqnarray*}z'(t)&=&\frac{x_1(t) x_2'(t)-x_1'(t) x_2(t)}{x_1(t)^2}=\frac{x_2'(t)}{x_1(t)}-\frac{x_1'(t) }{x_1(t)}\cdot z(t)\\
		&=&r_2(\omega t)z(t) +m -m z(t)-(r_1(\omega t) +m z(t)-m)\cdot z(t)\\&=&(r_2(\omega t)-r_1(\omega t))z(t) +m( 1-z(t)^2).\end{eqnarray*}
\end{proof}

Regarding the differential equation (\ref{dz}), we note that $z=0$ implies $z'>0$, so 
that any solution with positive initial condition
remains positive for all $t>0$.
Moreover the following lemma shows that all solutions of (\ref{dz}) 
approach a unique periodic solution as $t\rightarrow \infty$, and that this periodic solution can be used to compute the growth rate 
$\Lambda(m,\omega)$.
\begin{lemma}\label{per}
	Assume $m>0$.
	For each $\omega>0$, there exists a unique $\frac{2\pi}{\omega}$-periodic solution of 
	(\ref{dz}), which we denote by $z_p (t)$, and this solution is globally stable, that is, all solutions $z(t)$ of (\ref{dz}) with $z(0)>0$ satisfy
	\begin{equation}\label{ll}\lim_{t\rightarrow\infty}[z(t)-z_p(t)]=0.\end{equation}
	
	The growth rate of solutions of (\ref{eq1}),(\ref{eq2}) is given by
	\begin{equation}\label{kf3}\Lambda(m,\omega)=\frac{1}{2}\left(\bar{r}_1+\bar{r}_2\right)+m\cdot \left( \frac{\omega}{4\pi}\int_0^{\frac{2\pi}{\omega}}\left(z_p(\tau)+\frac{1}{z_p(\tau)}\right)d\tau -1 \right).\end{equation}
\end{lemma}
\begin{proof}
	We first transform (\ref{dz}) by 
	making the change of variable $z(t)=e^{y(t)}$ (using the fact that $z(t)$ is positive), to obtain the equation
	\begin{equation}\label{yy}y'=(r_2(\omega t)-r_1(\omega t))+m(e^{-y}-e^{y}).\end{equation}
	We will show that (\ref{yy}) has a unique $\frac{2\pi}{\omega}$-periodic solution $y_p(t)$
	which is globally stable in the sense that, for any solution $y(t)$
	of (\ref{yy}),
	$$
	\lim_{t\rightarrow\infty}[y(t)-y_p(t)]=0.
	$$
	This will imply the result of Lemma \ref{per} for the equation (\ref{dz}), with $z_p(t)=e^{y_p(t)}$.
	Denote by $y(t,y_0)$ the solution of (\ref{yy}) satisfying the
	initial condition $y(0)=y_0$.
	We define $\psi$ to be the time-$T$ ($T=\frac{2\pi}{\omega}$) Poincar\'e mapping
	corresponding to (\ref{yy}): 
	$$\psi(y_0)=y\left(T,y_0 \right).$$
	To compute the derivative of this function we differentiate the equations
	$$y_t(t,y_0)=(r_2(\omega t)-r_1(\omega t))+m[e^{-y(t,y_0)}-e^{y(t,y_0)}],\;\;\; y(0,y_0)=y_0$$
	with respect to $y_0$, obtaining
	$$y_{y_0 t}(t,y_0)=-m[e^{-y(t,y_0)}+e^{y(t,y_0)}]y_{y_0}(t,y_0)\;\;\; y_{y_0}(0,y_0)=1,$$
	leading to
	$$y_{y_0}(t,y_0)=e^{-m\int_0^t (e^{-y(\tau,y_0)}+e^{y(\tau,y_0)})d\tau},$$
	so that 
	$$\psi'(y_0)=y_{y_0}(T,y_0)=e^{-m\int_0^T (e^{-y(\tau,y_0)}+e^{y(\tau,y_0)})d\tau}$$
	satisfies $0<\psi'(y_0)<e^{-2mT}<1$ for all $y_0$. This implies that 
	$\psi:\Real\rightarrow \Real$ is a contraction mapping, so 
	that by Banach's contraction mapping principle (\cite{hale}, Section 0.3) it has a unique fixed point $y^*\in \Real$, and, for all $y_0\in \Real$, the iterates of $\psi$ satisfy
	$\lim_{k\rightarrow \infty} \psi^k(y_0)=y^*$.
	This, in turn, implies that the function $y_p(t)=y(t,y^*)$ is 
	a $T$-periodic solution of (\ref{yy}), which is globally stable. 
	
	We now show that the periodic solution $z_p$ determines 
	the growth rate $\Lambda$ of $x_1,x_2$.
	By (\ref{d2}) and (\ref{ll}) we have
	$$\lim_{t\rightarrow \infty}\frac{1}{t}\ln\left(x_1(t) \right)=\lim_{t\rightarrow \infty}\frac{1}{t}\int_0^t \frac{x_1'(\tau)}{x_1(\tau)}d\tau$$
	$$=\lim_{t\rightarrow \infty}\frac{1}{t}\left[\int_0^t r_1(\omega \tau)d\tau +m \int_0^t(z(\tau)-1)d\tau \right]=\bar{r}_1+m\cdot \left( \frac{\omega}{2\pi}\int_0^{\frac{2\pi}{\omega}}z_p(\tau)d\tau -1 \right). $$
	Note also that, by (\ref{defz}), and since (\ref{ll}) implies that any solution $z(t)$ of (\ref{dz}) is bounded on $[0,\infty)$, we have
	$$\lim_{t\rightarrow \infty}\frac{1}{t}\ln(x_2(t))=\lim_{t\rightarrow \infty}\frac{1}{t}\left[\ln(z(t))+\ln(x_1(t))\right] =\lim_{t\rightarrow \infty}\frac{1}{t}\ln(x_1(t)).$$
	We thus have
	\begin{equation}\label{kf}\Lambda[x_1]=\Lambda[x_2]=\Lambda(m,\omega)=\bar{r}_1+m\cdot \left( \frac{\omega}{2\pi}\int_0^{\frac{2\pi}{\omega}}z_p(\tau)d\tau -1 \right).\end{equation}
	By exchanging the roles of $r_1(\theta)$ and $r_2(\theta)$,
	so that $z_p$ is replaced by $\frac{1}{z_p}$ we get the equivalent expression
	\begin{equation}\label{kf2}\Lambda(m,\omega)=\bar{r}_2+m\cdot \left( \frac{\omega}{2\pi}\int_0^{\frac{2\pi}{\omega}}\frac{1}{z_p(\tau)}d\tau -1 \right),\end{equation}
	and by averaging (\ref{kf}),(\ref{kf2}) we obtain the symmetric expression (\ref{kf3}).
\end{proof}
We note that the plots in Figures \ref{psinksink},\ref{psourcesink} were obtained by 
using Lemma \ref{per}: For each point in a 
grid in the plotted region, the periodic solution $z_p$
was found numerically, and the quantity $\Lambda(m,\omega)$
was computed using (\ref{kf}). The green (growth) region is the set of points 
for which $\Lambda(m,\omega)>0$.

\section{The low and high frequency limits of the growth rate}
\label{low_high}

The characterization of the growth rate $\Lambda$ in
terms of the principal eigenvalue of a periodic problem
(Lemma \ref{connect}) allows us to employ the recent results of \cite{liu}, which determine the limit of 
$\Lambda(m,\omega)$ in the cases $\omega\rightarrow 0$
and $\omega\rightarrow \infty$, and thus provide a key element in the proofs of Theorems \ref{sinksink},\ref{sourcesink}. In this section, we cite results from \cite{liu} and apply them to our specific problem.

\subsection{The low frequency limit}
\label{low}

We now present an explicit expression for the growth rate
$\Lambda(m,\omega)$ in the limit $\omega\rightarrow 0$ (see Figure \ref{quasistatic2} for a numerical illustration of the contents of this result).

\begin{figure}
	\begin{center}
		\includegraphics[width=0.8\linewidth]{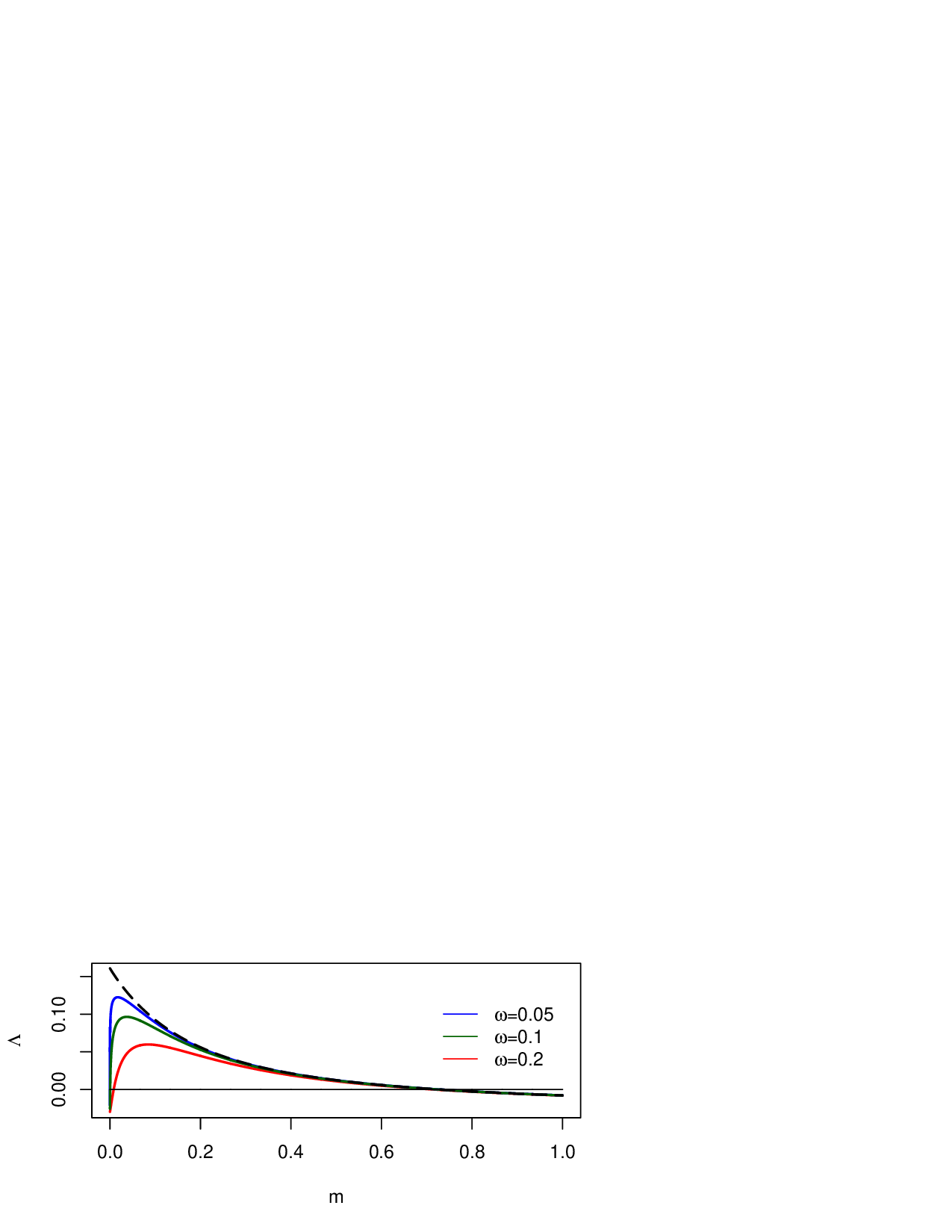}
		\caption{Illustration of the result of Lemma 5. Here
			$r_1(\omega t)=-0.03+0.3\cos(\omega t),\;\;r_2(\omega t)=r_1(\omega t-\pi)$.
			The curves $m\rightarrow \Lambda(m,\omega)$
			are plotted for $\omega=0.2,0.1,0.05$, and 
			they can be seen to converge to 
			the curve $\Lambda_0(m)$ (dashed line).
		}\label{quasistatic2}
	\end{center}
\end{figure}

Theorem 2.1 of \cite{liu} (in the case $a=0$ of that theorem), applied to the periodic eigenvalue problem (\ref{ep1}), gives:

\begin{lemma}\label{omega0}
	%	Define
	%	\begin{equation}\label{dl0}\Lambda_0(m)
		%		\doteq \frac{1}{2}\left[\bar{r}_1+\bar{r}_2+ \frac{1}{2\pi}\int_0^{2\pi}\sqrt{(r_1(\theta)-r_2(\theta))^2+4m^2} d\theta\right]-m.
		%	\end{equation}
	Let $\Lambda_0(m)$ be defined by (\ref{l0}).
	Then we have, for each $m>0$:
	$$\lim_{\omega\rightarrow 0+}\Lambda(m,\omega)=\Lambda_0(m).$$
\end{lemma}

The auxilliary results in the following lemma will be needed below.

\begin{lemma}\label{matrixa}
	Let $L$ be an $N\times N$ symmetric matrix with non-negative non-diagonal elements, which is irreducible, and satisfies (\ref{diagonal}). Let 
	$$D=\left(\begin{array}{ccc}
		d_1	& 0 & 0 \\
		0	& \ddots &  0\\
		0	& 0 & d_N
	\end{array} \right)$$
	be a diagonal matrix. Then
	
	(i) If the numbers $d_i$ ($1\leq i\leq N$) are not all equal to each other, then $\lambda(D+mL)$ is strictly monotone decreasing with respect to $m$. If $d_i=d$, for all $i$ then $\lambda(D+mL)=d$.
	
	(ii) As $m\rightarrow \infty$, we have
	$$\lim_{m\rightarrow \infty}\lambda(D+mL)=\frac{1}{N}\sum_{i=1}^N d_i.$$
\end{lemma}

(iii) $\lambda (D+mL)<\max_{1\leq i\leq N}d_i$
for all $m$.

\begin{proof}
	Since $L$ has non-negative non-diagonal elements, is irreducible, and satisfies (\ref{diagonal}), the 
	Perron-Frobenius theorem implies that its maximal eigenvalue 
	is $\lambda(L)=0$, with corresponding eigenvector $\bold{1}=(1,1,\cdots, 1)$. Note that $\lambda(L)=0$ means that $L$ is negative semi-definite.
	
	If $A,B$ are symmetric matrices and $B$ is positive semi-definite, then $\lambda(A+B)\geq \lambda(A)$ - this is a consequence of the
	min-max principle (see {\it{e.g.}}, \cite{bhatia}, Corrollary III.2.3). Apply this with $A=D+m_2L$, $B=(m_1-m_2)L$,
	where $0\leq m_1< m_2$ - note that $B$ is positive semi-definite since $L$ is negative semi-definite -
	to conclude that  $\lambda(D+m_2L)=\lambda(A)\leq \lambda(A+B)=\lambda(D+m_1L)$. We have therefore shown that
	$\lambda(D+mL)$ is monotone decreasing (the fact that it is strictly decreasing  when the $d_i$'s are not all equal will be shown below). 
	
	To prove (ii), note that $\lambda(D+mL)=m\cdot \lambda (L+m^{-1}D)$, and study 
	the maximal eigenvalue of $L+\epsilon D$ for small $\epsilon$. As $\epsilon\rightarrow 0+$, each of eigenvalues of $M(\epsilon)=L+\epsilon D$ converges to a corresponding eigenvalue of $L$ (see {\it{e.g.}} \cite{bhatia}, Corollary III.2.6), and since the maximal eigenvalue of $L$ is $0$, the maximal eigenvalue of 
	$M(\epsilon)$, which we denote by $\mu(\epsilon)$, 
	satisfies $\lim_{\epsilon\rightarrow 0}\mu(\epsilon)=0$.
	Denoting the eigenvector corresponding to $\mu(\epsilon)$ by $\bold{v}(\epsilon)$, we have $\bold{v}(0)=\bold{1}=(1,1,\cdots,1)$ and
	$$M(\epsilon)\bold{v}(\epsilon)=\mu(\epsilon) \bold{v}(\epsilon)\;\;\Rightarrow\;\;M'(\epsilon)\bold{v}(\epsilon)+M(\epsilon)\bold{v}'(\epsilon)=\mu'(\epsilon) \bold{v}(\epsilon) +\mu(\epsilon) \bold{v}'(\epsilon).$$
	Setting $\epsilon=0$ and denoting by $\langle \cdot,\cdot \rangle$ the standard inner product on $\Real^N$, we get
	$$ D\bold{1}+L\bold{v}'(0)=\mu'(0) \bold{1}\;\;\Rightarrow\;\; \langle D\bold{1},\bold{1} \rangle +\langle L\bold{v}'(0),\bold{1}\rangle=\mu'(0) \langle \bold{1},\bold{1}\rangle,$$
	and since by symmetry of $L$ we have $\langle L\bold{v}'(0),\bold{1}\rangle=\langle \bold{v}'(0),L\bold{1}\rangle=0$, we conclude that
	$$\mu'(0)=\frac{\langle D\bold{1},\bold{1} \rangle}{\langle \bold{1},\bold{1}\rangle}=\frac{1}{N}\sum_{i=1}^N d_i, \;\;\;\Rightarrow\;\;\;\mu(\epsilon)=\mu'(0)\epsilon +O(\epsilon^2)=\frac{\epsilon}{N}\sum_{i=1}^N d_i+O(\epsilon^2).$$
	Therefore, as $m\rightarrow \infty$,
	\begin{eqnarray*}\lambda(D+mL)&=&m\cdot \lambda(L+m^{-1}D)=m\cdot \left( \frac{m^{-1}}{N}\sum_{i=1}^N d_i+O(m^{-2})\right)\nonumber\\
		&=&\frac{1}{N}\sum_{i=1}^N d_i+O(m^{-1}),
	\end{eqnarray*}
	proving (ii).
	
	We now note that, since the function $m\rightarrow\lambda(D+mL)$ is real-analytic, it cannot be constant on an interval unless 
	it is everywhere constant, but if this is the case then part (ii) implies that
	$$\frac{1}{N}\sum_{i=1}^N d_i=\lim_{m\rightarrow\infty}\lambda(D+mL)=\lambda(D)=\max_{1\leq i\leq N} d_i,$$ which occurs 
	iff all $d_i$'s are equal. Therefore if the $d_i$'s are not equal we have
	that $m\rightarrow \lambda(D+mL)$ is strictly decreasing, and when the all $d_i$'s are equal, it is constant.
	
	(iii) follows from (i), since
	$\lambda(D+mL)\leq \lambda(D)=\max_{1\leq i\leq N}d_i$.
\end{proof}

In the following lemma we derive properties of the function $\Lambda_0(m)$,
which will be used in the proofs of the main theorems:

\begin{lemma}\label{prl}
	(i) $\Lambda_0(0)=\chi$, where $\chi$ is given by (\ref{dchi}).
	
	(ii) $\lim_{m\rightarrow\infty} \Lambda_0(m)=\bar{r}\doteq\frac{1}{N}\sum_{i=1}^N\bar{r}_i$.
	
	(iii) Assuming that it is not the case that $r_i(\theta)\equiv r_j(\theta)$ for all $i,j$,	the function $\Lambda_0(m)$ is strictly monotone decreasing on $[0,\infty)$.
	
	(iv) If $\chi<0$ then $\Lambda_0(m)<0$ for all
	$m>0$.
	
	(v) If $\chi>0$ and $\bar{r}<0$, there is a unique value $m^*$ such that $\Lambda_0(m^*)=0$,
	that is a solution of the equation (\ref{al}), and we have 
	\begin{eqnarray}\label{mv1}
		m\in (0,m^*)\;\;&\Rightarrow&\;\;\Lambda_0(m)>0,\nonumber\\
		m> m^*\;\;&\Rightarrow&\;\;\Lambda_0(m)<0.
	\end{eqnarray}
\end{lemma}

\begin{proof}
	(i) Since $R(\theta)$ is a diagonal matrix we have
	$\lambda(R(\theta))=r_{max}(\theta)$, hence
	by the definition (\ref{l0}) of $\Lambda_0(m)$, we have
	$$\Lambda_0(0)=\frac{1}{2\pi}\int_0^{2\pi}\lambda(R(\theta))d\theta=\frac{1}{2\pi}\int_0^{2\pi}r_{max}(\theta)d\theta=\chi.$$
	
	(ii) Fixing $\theta$ and applying Lemma \ref{matrixa}(ii) with $D=R(\theta)$ we have
	\begin{equation}\label{lime}\lim_{m\rightarrow \infty}\lambda (R(\theta)+mL)=\frac{1}{N}\sum_{i=1}^N r_i(\theta).\end{equation}
	Since, by Lemma \ref{matrixa}(i), $\lambda(R(\theta)+mL)$ is monotone 
	decreasing or constant with respect to $m$ for each fixed $\theta$, Lebesgue's Monotone Convergence Theorem (\cite{rudin}, Th. 11.28) and 
	(\ref{lime}) imply
	$$\lim_{m\rightarrow\infty}\Lambda_0(m)=\frac{1}{2\pi}\lim_{m\rightarrow\infty}\int_0^{2\pi}\lambda(R(\theta)+mL)d\theta=\frac{1}{2\pi}\frac{1}{N}\int_0^{2\pi}\sum_{i=1}^N r_i(\theta)d\theta=\bar{r}.$$
	
	(iii) By Lemma \ref{matrixa}(i) $\lambda(R(\theta)+mL)$ is monotone decreasing or constant with respect to $m$ for any value of theta $\theta$,
	and it is strictly monotone decreasing unless all
	$r_i(\theta)$'s are equal, so the definition (\ref{l0}) implies that $\Lambda_0(m)$ 
	is strictly monotone increasing unless all $r_i(\theta)$'s are everywhere equal.
	
	(iv) follows  from (i) and (iii).
	
	(v) If $\chi>0$ and $\bar{r}<0$ then by (i),(ii) we have $\Lambda_0(0)>0$, $\lim_{m\rightarrow \infty} \Lambda_0(m)<0$, and since by (iii) the function 
	$\Lambda_0(m)$ is strictly decreasing (note that 
	$\chi>0$, $\bar{r}<0$ preclude the possibility that all $r_i(\theta)$ are indentical) we conclude that there exists a unique $m^*>0$ with
	$\Lambda_0(m)=0$, and that (\ref{mv1}) holds.
\end{proof}

\subsection{The high frequency limit}
\label{high}
We now study the behavior of $\Lambda(m,\omega)$ in the limit 
$\omega\rightarrow\infty$.

Theorem 1.2 of \cite{liu} tells us that 
\begin{lemma}\label{oi}
	Let $\Lambda_\infty(m)$ be defined by (\ref{domi}).
	Then we have, for all $m>0$,
	$$\lim_{\omega\rightarrow\infty}\Lambda(m,\omega)=\Lambda_\infty(m).$$
\end{lemma}

We now derive properties of the function
$\Lambda_\infty(m)$, which will be used in the proofs of the main theorems.

\begin{lemma}\label{pli}
	(i) $\Lambda_\infty(0)=\max_{1\leq i\leq N}\bar{r}_i$.
	
	(ii) $\lim_{m\rightarrow \infty}\Lambda_{\infty}(m)=\bar{r}\doteq \frac{1}{N}\sum_{i=1}^N \bar{r}_i.$
	
	(iii) $\Lambda_\infty(m)$ is strictly monotone decreasing, except in the case that $\bar{r}_i=\bar{r}_j$ for all $i,j$, in which it is constant.
	
	(iv) For all $m>0$ we have
	$\Lambda_\infty(m)\leq \Lambda_0(m)$,
	with strict inequality unless all functions $r_i(\theta)-r_j(\theta)$ 
	are constant.
	
	(v) When $\bar{r}>0$ we have $\Lambda_\infty(m)>0$ for all $m>0$.
	
	(vi) When $\bar{r}_i<0$ for all $i$, we have 
	$\Lambda_\infty(m)<0$ for all $m>0$.
	
	(vii) When $\bar{r}<0$ and (\ref{pn}) holds, the 
	equation (\ref{li0}) has a unique solution
	$\hat{m}$, and 
	\begin{eqnarray}\label{mv3}
		m\in (0,\hat{m})\;\;\Rightarrow\;\; \Lambda_\infty(m)>0,\nonumber\\
		m>\hat{m}\;\;\Rightarrow\;\; \Lambda_\infty(m)<0.
	\end{eqnarray}
\end{lemma}
\begin{proof}
	
	(i) $\Lambda_\infty(0)=\lambda(\bar{R})=\max_{1\leq i\leq N}\bar{r}_i$.
	
	(ii) and (iii) follow from Lemma \ref{matrixa}(i),(ii) taking $D=\bar{R}$.
	
	(iv) follows by combining the results of Lemmas \ref{monotone}, \ref{omega0} and \ref{oi}: since 
	$\Lambda(m,\omega)$ is monotone decreasing in 
	$\omega$, 
	$$\Lambda_\infty(m)=\lim_{\omega\rightarrow\infty}\Lambda(m,\omega)\leq \lim_{\omega\rightarrow 0+}\Lambda(m,\omega)=\Lambda_0(m).$$
	Unless all $r_i(\theta)-r_j(\theta)$ are constant, Lemma 
	\ref{monotone} implies that the inequality is strict.
	
	(v) Follows from (ii) and (iii).
	
	(vi) Follows from (i) and (iii).
	
	(vii) If $\bar{r}<0$ and (\ref{pn}) holds, then
	by (i) we have $\Lambda_\infty(0)>0$,
	by (ii) we have $\lim_{m\rightarrow\infty}\Lambda_\infty(m)<0$, and
	by (iii) we have that $\Lambda_\infty(m)$ is strictly monotone decreasing. Hence there is 
	a unique value $\hat{m}\in (0,\infty)$ such that
	$\Lambda_\infty(\hat{m})=0$, and (\ref{mv3}) holds.
\end{proof}

\section{Proofs of the main theorems}
\label{mainproof}

We now combine the results obtained in the previous sections to obtain the proofs of the main theorems.

\begin{proof}[Proof of Theorem \ref{sinksink}]
	Here we assume the all-sink case, $\bar{r}_i<0$ for $1\leq i\leq N$.
	
	Let $\chi$ be defined by (\ref{dchi}). (\ref{limit})
	follows from Lemma \ref{omega0} and Lemma \ref{prl}(i).
	
	If $\chi<0$ then Lemma \ref{prl}(iv) and Lemma 
	\ref{omega0} imply that, for any $m>0$, we have $\Lambda(m,\omega)<0$ for 
	$\omega$ sufficiently small, hence Lemma \ref{monotone} implies that $\Lambda(m,\omega)<0$ for all
	$\omega$. Therefore we have part (I) of Theorem \ref{sinksink}.
	
	We now assume $\chi>0$. Lemma \ref{prl}(v) then implies that 
	there is a unique solution $m^*$ of the equation (\ref{al}), and 
	that (\ref{mv1}) holds.
	Also, by Lemma \ref{pli}(vi), we have that
	\begin{equation}\label{mv2}
		\Lambda_\infty(m)<0, \;\;\;m\in (0,\infty).
	\end{equation}
	By Lemma \ref{monotone}, we have that $\Lambda(m,\omega)$ is strictly monotone decreasing with respect to $\omega$ - indeed 
	it is impossible that $r_i(\theta)-r_j(\theta)$ are
	constant for all $i,j$, since this would imply that 
	one the functions $r_i(\theta)$ is larger
	than all the others, leading to $\chi=\bar{r}_i<0$,
	in contradiction with our assumption $\chi>0$.
	
	We first prove part (II)b, fixing $m> m^*$. (\ref{mv1}) and Lemma
	\ref{omega0} imply that $\Lambda(m,\omega)<0$ for 
	$\omega$ sufficiently small, and since
	$\Lambda(m,\omega)$ is monotone decreasing with respect to $\omega$, we have
	$\Lambda(m,\omega)<0$ for all $\omega>0$. 
	Note also that by continuity this implies $\Lambda(m^*,\omega)\leq 0$ for all $\omega$, and since
	$\Lambda(m^*,\omega)$ is strictly decreasing with respect to $\omega$ we conclude that $\Lambda(m^*,\omega)<0$ also holds.
	We thus have (II)(b).
	
	To prove part (II)(a) of the theorem, we now fix $m\in (0,m^*)$. (\ref{mv1}) and Lemma \ref{omega0}
	imply that $\Lambda(m,\omega)>0$ for $\omega>0$ sufficiently small, while (\ref{mv2}) and Lemma \ref{oi}
	imply that $\Lambda(m,\omega)<0$ for $\omega>0$ sufficiently large. By the monotonicity of $\Lambda(m,\omega)$ with respect to $\omega$, the above two facts imply that there exists a unique value
	$\omega=\omega_{c}(m)>0$ for which $\Lambda(m,\omega)=0$,
	and we have
	\begin{eqnarray}\label{bp}
		\omega \in (0,\omega_{c}(m))\;\;&\Rightarrow&\;\;
		\Lambda(m,\omega)>0\nonumber\\
		\omega> \omega_{c}(m)\;\;&\Rightarrow&\;\; \Lambda(m,\omega)<0.
	\end{eqnarray}
	Since the function $\Lambda(m,\omega)$ is real-analytic (see remark preceding Lemma \ref{connect}), and $\omega_c(m)$ is defined 
	implicitly by $\Lambda(m,\omega_c(m))=0$ (for $m\in (0,m^*)$), and $\Lambda_\omega'(m,\omega_c(m))<0$ (Lemma \ref{monotone}), the real-analytic implicit function theorem (see, {\it{e.g.}}, \cite{krantz}, Section 6.1) implies that
	$\omega_c(m)$ is a real-analytic function.
	
	To complete the proof we show that the function $\omega_c(m)$ can be continuously extended to the closed interval $[0,m^*]$,
	with $\omega_c(0)=\omega_c(m^*)=0$.
	To show that $\lim_{m\rightarrow 0+}\omega_c(m)=0$,
	we fix $\omega_0>0$ and show that $\omega_c(m)<\omega_0$
	for $m$ sufficiently small. Indeed, since $\bar{r}_i<0$ for all $i$, we know by (\ref{m0}) that for $m=0$ we have $\Lambda(0,\omega_0)=\max_{1\leq i\leq N}\bar{r}_i<0$, hence by continuity 
	$\Lambda(m,\omega_0)<0$ for $m$ sufficiently small, so that,
	for such $m$, (\ref{bp}) implies $\omega_c(m)<\omega_0$.
	Similarly, to show that $\lim_{m\rightarrow m^*-}\omega_c(m)=0$,
	we fix $\omega_0>0$ and show that $\omega_c(m)<\omega_0$
	for $m<m^*$ sufficiently close to $m^*$. 
	Indeed, by part (II)(b), proved above,
	we have $\Lambda(m^*,\omega_0)<0$, hence by continuity 
	$\Lambda(m,\omega_0)<0$ for $m$ sufficiently close to 
	$m^*$, so that, for such $m$, (\ref{bp}) implies $\omega_c(m)<\omega_0$.
\end{proof}

\begin{proof}[Proof of theorem \ref{sourcesink}]
	Here we assume the source-sink case (\ref{pn}).
	In the case $\bar{r}>0$, Lemmas \ref{monotone}, \ref{oi} and \ref{pli}(v) imply $\Lambda(m,w)\geq \lim_{\omega\rightarrow \infty} \Lambda(m,\omega)=\Lambda_\infty(m)>0$, proving part (I) of the theorem. 
	
	To prove part (II), we now assume $\bar{r}<0$.
	By Lemma \ref{prl}(i) and (\ref{pn}) we have
	\begin{eqnarray*}\Lambda_0(0)&=&\chi=\frac{1}{2\pi}\int_0^{2\pi}\max_{1\leq i\leq N}r_{i}(\theta)d\theta\geq \frac{1}{2\pi}\max_{1\leq i\leq N} \int_0^{2\pi}r_i(\theta)d\theta=\max_{1\leq i\leq N}\bar{r}_i>0\end{eqnarray*}
	Therefore Lemma \ref{prl}(v) implies that equation (\ref{al}) has a unique solution $m^*$, and that (\ref{mv1}) holds. 
	
	By Lemma \ref{pli}(vii) we have that a solution $\hat{m}$ of (\ref{li0}) exists, and (\ref{mv3}) holds. 
	Assume now that $r_i(\theta)-r_j(\theta)$ are not all constant.
	By Lemma \ref{pli}(iv) we have
	$$\Lambda_0(m^*)=0=\Lambda_\infty(\hat{m})< \Lambda_0 (\hat{m}),$$
	which, since $\Lambda_0(m)$ is a decreasing function (Lemma \ref{prl}(iii)), implies $\hat{m}< m^*$.
	
	We consider three cases:
	
	a. Assume $m\in (0,\hat{m})$. Then (\ref{mv3}) and Lemma \ref{oi} imply that $\Lambda(m,\omega)>0$ for $\omega$ sufficiently large.
	But since $\Lambda(m,\omega)$ is decreasing with respect to $\omega$ (Lemma \ref{monotone}) we conclude that $\Lambda(m,\omega)>0$ for
	all $\omega>0$. By continuity we obtain also $\Lambda(\hat{m},\omega)\geq 0$
	for all $\omega>0$, and, since $ \Lambda(m^*,\omega)$ is strictly decreasing with respect to $\omega$, this implies $\Lambda(\hat{m},\omega)> 0$.
	
	b. Assume that $m\in (\hat{m},m^*)$. Since $m<m^*$, (\ref{mv1}) and Lemma \ref{omega0} imply that $\Lambda(m,\omega)>0$ for $\omega>0$ sufficiently small. On the other hand (\ref{mv3}) and Lemma \ref{oi} imply $\Lambda(m,\omega)<0$
	for $\omega$ sufficiently large. These two facts, together with the strict monotonicity of $\Lambda(m,\omega)$ with respect to $\omega$, imply that there is a unique value $\omega_c(m)$ such that (\ref{bp}) holds, which is the desired conclusion. 
	
	c. Assume $m>m^*$. Then (\ref{mv1}) and
	Lemma \ref{omega0} imply that $\Lambda(m,\omega)<0$ for $\omega>0$ sufficiently small, and since
	$\Lambda(m,\omega)$ is monotone decreasing with resepct to $\omega$ we conclude that $\Lambda(m,\omega)<0$ for all $\omega>0$. By continuity 
	this implies $\Lambda(m^*,\omega)\leq 0$ for all $\omega$, which by the strict
	monotonicity of $\Lambda(m^*,\omega)$ with respect to $\omega$ implies $\Lambda(m^*,\omega)<0$.
	
	To conclude, we prove the properties of the function $\omega_c(m)$ stated in the theorem. Real-analyticity follows from the implicit function theorem, as 
	in the proof of Theorem \ref{sinksink} above. 
	
	To show that 
	$\lim_{\omega\rightarrow \hat{m}+}\omega_c(m)=+\infty$, we fix $\omega_0>0$.
	By case a above, we have $\Lambda(\hat{m},\omega_0)>0$, hence 
	by continuity $\Lambda(m,\omega_0)>0$ for $m>\hat{m}$ sufficiently close to 
	$\hat{m}$, which, by (\ref{bp}), implies $\omega_c(m)>\omega_0$ for such $m$.
	
	To show that 
	$\lim_{\omega\rightarrow m^*-}\omega_c(m)=0$, we fix $\omega_0$.
	By case c above, we have $\Lambda(m^*,\omega_0)<0$, hence $\Lambda(m,\omega_0)<0$
	for $m<m^*$ sufficiently close to $m^*$, which, by (\ref{bp}), implies $\omega_c(m)<\omega_0$ for such $m$.

\end{proof}

\section{Discussion}

The DIG effect is an interesting example of an emergent dynamical phenomenon
which arises from the combination of several elementary mechanisms, and which
cannot occur if any of the mechanisms is excluded.
The mechanisms here are: (i) Temporal heterogeneity: population growth rate of at least one patch varies  in time.
(ii) Spatial heterogeneity: the population growth rate profiles in the patches are not identical, (iii) Dispersal among the patches. 
Given these mechanisms we have seen that 
the populations can persist and grow despite the fact that each of the patches is a sink. In the absence of any one of these three mechanisms, population growth could not occur when all patches are sinks.

We have proved that DIG is a {\it{robust}} phenomenon, as it occurs {\it{regardless}} of the specific choice of the periodic local growth-rate profiles $r_i(\theta)$,  as long as the condition $\chi>0$ holds (with $\chi$ given by 
(\ref{dchi})), for 
a range of values of the frequency $\omega$ of the 
oscillations in growth rates and of the dispersal rate $m$. However we have seen that in order for 
DIG to occur, it is necessary that the frequency $\omega$ not be too large, and that the
disperal rate $m$ is neither too small
{\it{nor}} too large. 

Theorem \ref{sinksink} explains the main
features of the subset of parameters in the $(m,\omega)$ plane for which
the DIG pheonomenon occurs, as observed in the
numerical results presented in
Figure \ref{psinksink}, and discussed in Section
\ref{numerical}. An additional feature observed in this figure, and in analogous figures we have plotted 
for other periodic profiles $r_1(\theta),r_2(\theta)$, 
is that the function $\omega=\omega_c(m)$ is convex. We conjecture that this fact holds generally, and it is an interesting challenge to 
prove this.

Another numerical observation concerns 
the behavior of the curve $\omega=\omega_{c}(m)$ in the
vicinity of $m=0$. From the
numerical results it seems evident that this curve is tangent to the $\omega$-axis at the origin, which leads to the conjecture that
$\lim_{m\rightarrow 0}\omega_{c}'(m)=+\infty$. 
Note that this means that for
small $\omega$ very weak dispersal is sufficient to cause DIG. 
We note that studying this case in which $m$ and $\omega$ are 
simultaneously small is rather delicate.
We raised the above conjecture in an earlier preprint version of this work,
and it has now been established, in the case $N=2$ patches, in \cite{benaim} for piecewise-constant 
profiles $r_i(\theta)$ using explicit computations and for general profiles
in \cite{lobry}, using techniques of nonstandard analysis.

While the results obtained here show that the region 
in the $(m,\omega)$ parameter plane for which 
DIG occurs has qualitative features which are
independent of the topology of the network 
of patches, as encoded in the matrix $L$, 
it is of interest to furter explore 
how the quantitative properties of this 
set of parameters depends on the network topology, as well as on the 
form of the period growth profiles.
Such improved understanding 
would enable to better assess the extent and the circumstances under which the DIG effect is
relevant to explaining population persistence and 
growth in real-world ecosystems.

%\bibliographystyle{abbrvnat}
%\bibliography{DIGS}% common bib file
%% if required, the content of .bbl file can be included here once bbl is generated
%%\input sn-article.bbl

%% Default %%
%%\input sn-sample-bib.tex%

{}

\end{document}